\documentclass[11pt]{article}
\usepackage[margin=1in]{geometry}

\usepackage{amsmath}
\usepackage{amssymb}
\usepackage{amsthm}
\usepackage{hyperref}
\usepackage{graphicx} 
\usepackage[algosection,ruled,lined,linesnumbered,longend]{algorithm2e}
\usepackage{algorithmic}
\usepackage{latexsym}
\usepackage{graphicx,wrapfig,xcolor}
\usepackage{float}
\usepackage{caption}
\usepackage{subcaption}
\usepackage{comment}
\usepackage{xcolor}
\usepackage{algorithm2e}
\usepackage{wrapfig}
\usepackage{xspace}

\usepackage{todonotes}

\newcommand{\beq}{\begin{equation}}
\newcommand{\bet}{\begin{table}}
\newcommand{\eeq}{\end{equation}}

\newcommand{\E}{\mathbb{E}}

 \newtheorem{theorem}{Theorem}[section]
 \newtheorem{lemma}[theorem]{Lemma}

\newcommand{\defn}[1]{{\textit{\textbf{\boldmath #1}}}}

\usepackage{hyperref}
\usepackage{cleveref}
\usepackage{MnSymbol}

\newcommand{\stime}{time\xspace}
\newcommand{\stimes}{times\xspace}
\newcommand{\volume}{max-coordinate-volume\xspace}

\title{Multidimensional Resource Scheduling with Small Demands%
\thanks{Part of this work was done while the authors were visiting the
Simons Institute for the Theory of Computing at UC Berkeley during the
Fall 2025 research program \emph{Algorithmic Foundations for Emerging
Computing Technologies}.}}
\date{}

\author{Yossi Azar\thanks{Tel Aviv University, Israel. \texttt{azar@tau.ac.il}.
Yossi Azar was partially supported by ISF
Grant 1728/26.} 
\and Rathish Das\thanks{University of Houston, USA. \texttt{rathish@uh.edu}.} 
\and Hao Sun\thanks{University of Houston, USA. \texttt{hsun33@central.uh.edu}.}}

\begin{document}
\maketitle
\thispagestyle{empty}
\begin{abstract}

We study multidimensional resource scheduling. Each job \(i\) has a \(d\)-dimensional resource-demand vector \(v_i\) and a processing time \(s_i\). The scheduler assigns a start time to each job, subject to the constraint that, at every time, the total demand of the jobs being processed does not exceed \(1\) in any resource dimension. The objective is to minimize the makespan.

We focus on the regime in which every individual resource demand is small. We ask whether the favorable \emph{small-vector phenomenon} known for multidimensional vector packing, which corresponds to the special case of unit processing times, extends to jobs with heterogeneous processing times. The key difficulty is that arbitrary processing times create temporal interactions across multiple duration scales: a long job may overlap many shorter jobs, while feasibility must be maintained throughout every job's execution interval.

We prove that the small-vector phenomenon persists in this temporal setting. For any 
\(0<\epsilon<1/4\), after normalizing the maximum processing time to \(1\), if every coordinate of every demand vector is at most \(O(\epsilon^2/\log(d/\epsilon))\), we give a randomized offline algorithm that produces a schedule with expected makespan at most \(
(1+6\epsilon)\mathrm{OPT}+3 \).
Thus, sufficiently small resource demands admit asymptotically near-optimal schedules in arbitrary dimension, despite heterogeneous processing times.

We also obtain constant competitive ratios in the online setting. Let \(T\) denote the ratio between the maximum and minimum processing times. If every coordinate is at most \(O(1/(\log d\log T))\), we give a randomized \(O(1)\)-competitive algorithm, with a competitive ratio independent of both \(d\) and \(T\). We further derandomize our approach, obtaining a deterministic \(O(1)\)-competitive algorithm under a comparable smallness assumption.
\end{abstract}

\newpage
\setcounter{page}{1} 
\section{Introduction}
Modern computing systems execute many jobs concurrently while sharing several
limited resources.  A job may simultaneously consume CPU capacity, memory,
memory bandwidth, I/O bandwidth, network capacity, or accelerator resources,
and these requirements persist throughout the time that the job executes.
Representing such a job by a single scalar size can therefore hide the actual
source of contention: two jobs with comparable aggregate demand may interact
very differently depending on which resources they use.  This motivates
\emph{multidimensional resource scheduling}, where the resource requirement of
a job is represented by a vector rather than a scalar.

In this paper, we study the following fundamental form of this problem.
There are \(n\) independent jobs.  Job \(i\) has a processing time \(s_i>0\)
and a resource vector
\[
    v_i=(v_{i,1},\ldots,v_{i,d})\in\mathbb{R}_{+}^{d}.
\]
There is unit capacity in each of the \(d\) resources.  Once job \(i\) starts,
it occupies \(v_{i,k}\) units of resource \(k\) throughout its entire
processing interval.  Thus, if \(f(i)\) is its start time, a schedule is
feasible if, for every time \(t\) and every resource \(k\in[d]\),
\[
    \sum_{i:\,f(i)\le t<f(i)+s_i} v_{i,k}\le 1.
\]
The objective is to minimize the makespan
\[
    C_{\max}=\max_i\{f(i)+s_i\}.
\]
Geometrically, the time axis forms a strip, while every job occupies an
interval of length \(s_i\) and consumes a \(d\)-dimensional vector of
capacities throughout that interval.  We therefore also view the problem as a
multidimensional strip-packing problem. When the processing times of jobs are the same, our problem is equivalent to the well-studied vector-bin-packing problem~\cite{ChekuriKhanna2004}.

\paragraph{\textbf{Why is the problem important?}}
Multidimensional resource requirements arise naturally whenever jobs execute
concurrently on a shared system.  Classical work on multidimensional
scheduling was already motivated by tasks requiring several resources such as
CPU, disks, and network capacity simultaneously
\cite{ChekuriKhanna2004}.  This issue has only become more pronounced as
modern workloads simultaneously stress processors, memory, bandwidth, and
other components.  From an algorithmic perspective, resource constraints
interact with time in a particularly direct way: the scheduler must decide
\emph{which jobs should overlap}, since resources occupied by a job are
released only when that job completes.  Consequently, total work alone is not
sufficient to characterize a good schedule.

A natural lower bound captures this resource--time interaction.  Define
\[
    V_k=\sum_{i=1}^{n}s_i v_{i,k},
    \qquad
    V=\max_{k\in[d]}V_k.
\]
Resource \(k\) can supply only one unit of capacity per unit of time, and hence
every feasible schedule satisfies
\[
    \operatorname{OPT}\ge V.
\]
The central question of this paper is when this elementary volume bound is
also approximately achievable.  In general, resource demands can create
severe fragmentation: even when the total volume is small, the resource
profiles of simultaneously executing jobs may prevent the available
capacities from being efficiently utilized.

\paragraph{\textbf{What is known?}}
Our problem is closely related to bin packing problem, more generally to vector bin packing problem.
  
 {\em Bin
packing} has been studied since the work of
Johnson in the 70s (see, e.g.,~\cite{Johnson74,JohnsonDUGG74}
and surveys~\cite{GalambosW95,Sgall14} for later work).

For the vector bin packing (VBP) problem, we pack multi-dimensional vectors and the simple greedy algorithm
has a competitive ratio of $O(d)$~\cite{GareyGJY76}.
In the offline setting, this was improved to a
$O(1 + \epsilon d + \log (1/\epsilon))$-approximation in \cite{ChekuriKhanna2004}.
On the other hand, \cite{AzarCohenKamaraShepherd2013}, using ideas from \cite{DeanGV08},
showed that a $d^{1-\epsilon}$-approximation would imply $NP = ZPP$.
For fixed $d$ (i.e., super-polynomial in $d$ running times),
the approximation factor improves to
$\ln d + O(1)$~\cite{BansalEliasKhan2016,BansalCapraraSviridenko2009,ChekuriKhanna2004}.
To overcome the strong lower bound for arbitrary $d$,
Epstein~\cite{Epstein03a} initiated the study of vector bin packing with variable bin sizes.
In the online setting, the $O(d)$ bound was improved to $O(\log(d^{\frac{1}{B-1}}))$
for $B \geq 2$ in \cite{AzarCohenFiatRoytman2016,AzarCohenPanigrahi2018}, who also showed an information-theoretic
lower bound of $\Omega\left(d^{\frac{1}{B} - \epsilon}\right)$
(for any $\epsilon > 0$). 

For large enough $B$, the lower bound vanishes; for this case, \cite{AzarCohenFiatRoytman2016}
gave a $(1+\epsilon)\cdot e$-competitive algorithm if
$B = (1/\epsilon^2)\cdot \Omega(\log d)$ (for any $\epsilon > 0$).
If vectors are splittable, \cite{AzarCohenFiatRoytman2016} gave an $e$-competitive algorithm,
and a matching lower bound was shown in~\cite{AzarCohenRoytman2017}.  

More recently, Sandeep~\cite{Sandeep2021} established nearly matching dimension-dependent hardness results: for all sufficiently large constant \(d\), \(d\)-dimensional VBP has no asymptotic \(c\log d\)-approximation for some constant \(c>0\), unless \(\mathrm{P}=\mathrm{NP}\), and when \(d\) is part of the input, VS is hard to approximate within \(\Omega((\log d)^{1-\epsilon})\) under a stronger complexity assumption. On the algorithmic side, Kulik, Mnich, and Shachnai~\cite{KulikMnichShachnai2023} recently obtained an asymptotic \((1+\ln d-\chi(d)+\epsilon)\)-approximation for VBP, for a positive function \(\chi(d)\), and a \((4/3+\epsilon)\)-approximation for \(d=2\). These results highlight the intrinsic difficulty of multidimensional packing for unrestricted vectors.

The situation changes dramatically when individual vectors are
\emph{small}.  Azar et al.~\cite{AzarCohenFiatRoytman2016} initiated a systematic study
of this regime for online vector bin packing.  While arbitrary vectors admit
strong dimension-dependent lower bounds, they showed that sufficiently small
vectors admit a competitive ratio arbitrarily close to \(e\), independent of
\(d\), and gave a corresponding deterministic result with a slightly stronger
smallness assumption.  This demonstrated an important phenomenon:
multidimensional packing can become qualitatively easier when no individual
item consumes a significant fraction of any resource.

\paragraph{\textbf{What is missing?}}
The small-vector results above concern \emph{static packing}: an arriving
vector is assigned to a bin and remains part of the load of that bin.
Processing times play no role.  Conversely, resource-constrained scheduling
captures processing times, but its general approximation guarantees typically
depend on the number of resources and do not exploit the regime in which
every job has tiny demand in every resource.

Our goal is to understand whether the favorable ``small-vector phenomenon''
survives once time is introduced.  This is not an immediate consequence of
small-vector bin packing.  Assigning a vector to a bin creates one
\(d\)-dimensional feasibility constraint; assigning a job a start time creates
constraints throughout an entire interval.  A long job may overlap many short
jobs, different duration scales interact with one another, and an overload can
potentially occur at many different times and in any of the \(d\) resources.
Thus, techniques that control the load of a fixed bin do not directly control
the temporal pattern of overlaps.  

This distinction is especially apparent in the online setting: when job \(i\)
arrives, we must irrevocably decide its start time without knowing future jobs.

\paragraph{\textbf{Our results.}}
We show that the small-vector assumption remains remarkably powerful even
after introducing heterogeneous processing times.  Our first result is a
near-optimal offline algorithm.  We normalize the maximum processing time to
one. This is easily done offline and can also be done online since we use the doubling technique on the makespan. \footnotetext{Throughout the paper, all logarithms are base 2 unless otherwise specified.}

\begin{theorem}[Informal]
For every \(0<\epsilon<1/4\), if
\[
    \max_{i,k}v_{i,k}
    \le
    \frac{\epsilon^2}{24\log(d/\epsilon)},
\]
there is a randomized algorithm that constructs a schedule of expected
makespan
\[
    (1+6\epsilon)\operatorname{OPT}+3.
\]
Moreover, the algorithm can be implemented in \(O(dn\log n)\) time.  

\end{theorem}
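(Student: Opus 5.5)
The plan is to reduce to vector bin packing with small vectors by a geometric grouping of processing times, and then use a random "block" construction in time.

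\textbf{Step 1: rounding and classes.} Round every processing time up to a power of \((1+\epsilon)^{-1}\), or more simply to a power of \(2\). Since the maximum is normalized to \(1\), job \(i\) falls into class \(j\) with \(s_i\in(2^{-j-1},2^{-j}]\). Rounding up inflates each \(V_k\) by at most a factor \(2\). That is too crude for \(1+6\epsilon\), so I would instead round to powers of \((1+\epsilon)\); this costs a factor \((1+\epsilon)\) in volume and keeps the rounded lengths nested enough to tile. Better still, I would use classes of length \(2^{-j}\) but pack several short jobs sequentially inside one "slot" of length \(2^{-j}\). The waste from this is only at the slot ends, and it can be controlled by greedily concatenating jobs of a class into \emph{chains}, each of length in \([1-\epsilon',1]\) (fractional last chain aside). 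Each chain is treated as a single pseudo-job of duration \(1\) whose demand vector is the coordinatewise maximum of its members, which is at most the smallness bound. Taking the coordinatewise max rather than the sum loses volume. To fix this, I would order jobs within a class so that the max of a chain is close to its average weighted by time, but that fails in \(d\) dimensions. So I would instead charge the chain vector as \(\sum s_i v_i/\text{(chain length)}\) only in expectation, via a random assignment.

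\textbf{Step 2: randomized bin assignment.} I would cut time into unit windows \([t,t+1)\). For each window, the load at any instant is the sum of the vectors of the jobs whose slot covers that instant. Pick \(B=\lceil(1+\epsilon)V\rceil+O(1)\) windows. Assign each job independently and uniformly to a window, and within its window to a uniformly random offset on a grid of mesh equal to its class length. For a fixed window, a fixed grid time \(\tau\), and a coordinate \(k\), the expected load is then roughly \(\sum_i s_iv_{i,k}/B\le 1/(1+\epsilon)\). It is a sum of independent terms, each at most \(\delta=\epsilon^2/(24\log(d/\epsilon))\). A Chernoff bound gives overload probability \(\exp(-\Omega(\epsilon^2/\delta))=(\epsilon/d)^{\Omega(1)}\), and I would make the constant \(24\) give \(\le \epsilon^2/(d\cdot\#\text{grid points})\). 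The number of critical times per window is where I expect the main obstacle. The load is piecewise constant and changes only at job boundaries, so I need a union bound over only \(d\) times polylogarithmically many critical times. I would get this by restricting start times to dyadic grids, so that a job of class \(j\) aligns with dyadic intervals. The load at any instant is then a sum over classes, but the number of distinct instants is unbounded in the number of classes. To avoid this, I would group all classes with \(2^{-j}<\epsilon/d\) separately, treating them as fluid and bounding their contribution deterministically by sequential packing. Only \(O(\log(d/\epsilon))\) classes then remain, which gives \(\mathrm{poly}(d/\epsilon)\) critical times. This matches the \(\log(d/\epsilon)\) in the hypothesis.

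\textbf{Step 3: repair.} Any job that is in an overloaded window and coordinate is removed and scheduled afterward by a greedy list schedule. By the union bound, the expected removed volume is at most \(\epsilon\) times the total. Greedily appending removed jobs costs at most \(O(\epsilon)\mathrm{OPT}+1\) extra time, because small vectors let greedy maintain utilization at least \(1-\delta\) in the binding coordinate. The final accounting is \((1+\epsilon)^2\) from rounding and the slack in \(B\), plus \(O(\epsilon)\) from repair, plus additive constants from ceilings, the last window, and the fluid short classes. This totals \((1+6\epsilon)\mathrm{OPT}+3\). For the running time, sorting into classes takes \(O(n\log n)\), the assignment is linear, and checking loads costs \(O(dn\log n)\) by sweeping events.
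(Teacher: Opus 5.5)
\section*{Review of the proof proposal}

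The main gap is in Step~2: your union bound does not survive arbitrarily many duration scales. With dyadic grids, the instants at which a window's load can change are the grid points of the finest class, up to $T$ per unit window. That union bound is exactly what produces the $1/(\log d\log T)$ condition in the paper's online section, so it cannot give a $T$-independent $\log(d/\epsilon)$ condition.

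Your remedy of splitting off classes shorter than $\epsilon/d$ and treating them ``as fluid by sequential packing'' does not work, for several reasons:
\begin{itemize}
\item those jobs may carry essentially all of the volume;
\item a greedy or shelf packing of them can be off by a factor of order $d$;
\item putting them in a separate phase adds the phases' per-coordinate volumes instead of taking their maximum;
\item interleaving them with the long jobs makes the load change at unboundedly many instants again.
\end{itemize}

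The missing idea is the paper's: process jobs in nonincreasing order of rounded length, test each job only against the already placed (longer) jobs, and bound the overflow probability \emph{per job} rather than per window. Write $t_j=2^{q_j+r_j/g}$ with $g=1/\epsilon$. The first overloaded instant in $[t,t+t_i)$ is either $t$ itself or the start of an earlier job $j$. That start is a multiple of $2^{q_i+r_j/g}$, because $q_j\ge q_i$. For each of the $g$ residues $r'$, an interval of length $t_i<2^{q_i+1}$ contains at most two multiples of $2^{q_i+r'/g}$. So $2/\epsilon$ witness points suffice, independently of $T$.

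Your rounding step also fails as written. Powers of $1+\epsilon$ are not nested. With dyadic classes, a job with $s_i$ just above $2^{-j-1}$ placed on the $2^{-j}$-grid covers the start of its slot with probability $2^{-j}/B$, just like a job of length $2^{-j}$. The expected load there can therefore approach $2\sum_i s_iv_{i,k}/B$ rather than $\sum_i s_iv_{i,k}/B$. Your chain construction loses volume through coordinatewise maxima, as you note yourself. The ``charge only in expectation'' fix is unspecified, and it would also reintroduce unboundedly many load-change instants. The paper instead rounds to powers of $2^{\epsilon}$. This costs only a factor $1+\epsilon$ and yields exactly the $1/\epsilon$ interleaved dyadic families used above.

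Your repair step also fails as written. Greedy packing does not keep utilization $1-\delta$ in a single binding coordinate, because different jobs are blocked by different coordinates at different times. The shelf bound is $(1+2\delta)d\hat V+2$, so an $\epsilon$-fraction of removed volume can cost on the order of $\epsilon d\,V$, not $O(\epsilon)V$. The paper absorbs this factor $d$ by driving each job's overflow probability down to $2\epsilon/d^3$, from a bound of $\epsilon^2/d^4$ per witness point and coordinate. Together with $\sum_i t_iq_i\le dV$, this gives $\E[\sum_{i\in S}t_iq_i]\le 2\epsilon V/d^2$.
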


Thus, in the small-vector regime, the schedule can be made asymptotically
arbitrarily close to optimal even when \(d\) is not a constant.  The running
time is also nearly linear in the input size.  This is in sharp contrast with
configuration-based approaches to general vector scheduling: for example, the
classical fixed-dimensional PTAS discretizes vectors, enumerates
configurations, and combines dynamic programming with LP rounding, leading to
a polynomial whose exponent grows rapidly with \(d\) and the accuracy
parameter.

We next consider the online problem.  Here, jobs arrive one at a time, and the
start time of a job must be fixed upon its arrival.
Let \(
     T=\frac{\max_i s_i}{\min_i s_i}.
 \)

\begin{theorem}[Informal]
 If
\(
    \max_{i,k}v_{i,k}
    \le
    \frac{1}{100\log d\log T},
\)
there is a randomized online algorithm whose expected makespan is at most
\(
    96\,\operatorname{OPT}.
\)
\end{theorem}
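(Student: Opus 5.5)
The plan is to combine three standard ingredients: a doubling guess on the optimum, a geometric partition of jobs by processing time, and a randomized assignment of jobs to time slots, analyzed with a Chernoff bound plus a union bound over the $d$ coordinates. After normalizing the maximum processing time to $1$, we have $s_i\in[1/T,1]$. We round each processing time up to a power of two and obtain classes $c=0,1,\ldots,\lceil\log T\rceil$. Rounding up costs at most a factor $2$.

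\emph{Scheduling within one class.} Fix a guess $\Lambda$ with $\Lambda\ge \mathrm{OPT}\ge V$ and a class $c$ of length $2^{-c}$. We cut a time window of length $L_c$ into consecutive slots of length $2^{-c}$. Each arriving job of class $c$ is placed, uniformly at random, at the start of one of the $L_c2^c$ slots.

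Within a slot, all jobs of the class run simultaneously and finish before the slot ends. So the load on coordinate $k$ in a slot is the sum of the $v_{i,k}$ assigned there. Its expectation is $V_k^{(c)}/L_c$, where $V_k^{(c)}$ is the volume of class $c$ on coordinate $k$ after rounding. This is at most $2\Lambda/L_c$.

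We split the capacity across classes: class $c$ is allowed load at most $1/(\lceil\log T\rceil+1)$ in every coordinate at every time. All classes then run in parallel inside a common window, and feasibility holds automatically. We choose $L_c=\Theta(\Lambda\log T)$, so that the expected load of a slot is at most $\frac{1}{4(\log T+1)}$.

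\emph{Concentration.} Each summand is at most $\frac{1}{100\log d\log T}$. Scaling by $(\log T+1)$ turns each summand into at most $\approx 1/(100\log d)$, with mean at most $1/4$. A Chernoff bound then gives that a slot/coordinate pair exceeds its share with probability at most $d^{-\Omega(1)}$, and in fact roughly $2^{-\Omega(100\log d)}$.

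This is the main obstacle. We cannot union bound over all slots, since there are too many. So the algorithm must not simply fail on overload. Instead, online, a job that would overflow its chosen slot is deferred to an overflow region. In expectation only a $d^{-\Omega(1)}$ fraction of the volume is deferred in any coordinate, which we bound by summing over the $d$ coordinates and using linearity of expectation.

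The overflow jobs are scheduled greedily one at a time after the main window. Their expected volume is at most $V/2$, so recursing (or running a simple greedy pass with a geometric bound) adds an expected $O(\Lambda)$.

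\emph{Doubling and constants.} Online we do not know $\mathrm{OPT}$. We therefore use doubling on $\Lambda$: we start from the current lower bound $\max(V_{\text{so far}},1)$. When the volume seen so far exceeds the current $\Lambda$ (tracked per coordinate and class), we double $\Lambda$ and open a fresh window after the previous one. The windows for $\Lambda,\Lambda/2,\ldots$ sum geometrically.

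Each phase costs $O(\Lambda)$: a factor $2$ from rounding, a constant from the capacity share $\frac{1}{4}$ per class times $\log T$ classes (compensated by $L_c$), and the overflow term. The geometric sum contributes another factor $2$. Tracking these constants, with the factor $100$ in the smallness assumption absorbing the Chernoff exponent, gives $2\cdot 2\cdot 4\cdot(\text{overflow }1.5)\cdot 4 = 96$.

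The delicate point is making the overflow handling genuinely online and showing that its expected contribution is a constant fraction of $\Lambda$ independent of $d$ and $T$. The tool for this is the $d^{-\Omega(1)}$ tail combined with linearity over coordinates.
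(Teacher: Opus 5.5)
Your proposal has a genuine gap: splitting capacity evenly across the duration classes costs a $\Theta(\log T)$ factor, so the argument gives an $O(\log T)$-competitive algorithm rather than the claimed constant.

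The capacity split is where your argument breaks. Each of the $\lceil\log T\rceil+1$ duration classes may use only a $1/(\lceil\log T\rceil+1)$ share of every resource. To get expected slot load $2\Lambda/L_c\le 1/(4(\log T+1))$ you then need $L_c\ge 8\Lambda(\log T+1)$, which is exactly your choice $L_c=\Theta(\Lambda\log T)$. So every phase occupies a window of length $\Theta(\Lambda\log T)$. This is inherent to an equal split, not an artifact of your constants: if all of the volume $V$ lies in one class, that class alone needs time about $(\log T+1)V$ under its reserved share. The phrase ``compensated by $L_c$'' hides precisely this loss. The factorization $2\cdot2\cdot4\cdot1.5\cdot4=96$ is not derived from anything in the construction.

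The overflow step is also underspecified. Knowing that the overflow has max-coordinate volume at most $V/2$ does not by itself give an $O(\Lambda)$ greedy schedule. Simple online packers such as the paper's Shelf algorithm only guarantee a length of about $\sum_{i\in S}t_i\max_k v_{i,k}$, which can be $d$ times larger.

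The missing idea is that the classes should not be separated at all. In the paper every job, whatever its class, picks a uniformly random multiple of its rounded length $t_i$ inside one shared window $[0,\bar M)$ with $\bar M\ge\lceil 2V\rceil$, using full capacity. So at any time, in any coordinate, the expected load is at most $V/\bar M\le 1/2$. Interference between classes is controlled by the time grid, not by reservation. All start and completion times are multiples of $1/T$, so if job $i$ does not fit at its tentative start $\tau_i$, one of the at most $T$ grid points of $[\tau_i,\tau_i+t_i)$ witnesses the overload. After scaling by $100\log d\log T$ the mean is at most $\mu=50\log d\log T$. Chernoff with deviation $1/2$ then bounds the failure probability by $1/(Td^4)$ per grid point and coordinate, hence by $1/d^3$ per job. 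This union bound over the job's own interval is where the $\log T$ in the hypothesis is spent. You are right that one cannot union bound over all slots, but only the job's own interval matters.

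The $1/d^3$ bound is also what pays for the Shelf algorithm's factor-$d$ loss:
\[
\mathbb{E}\bigl[\textstyle\sum_{i\in S}t_i\max_k v_{i,k}\bigr]\le d^{-3}\sum_k V_k\le V/d^2 .
\]
This gives expected length at most $\bar M+2V/d^2+2\le 4\bar M$. With $\bar M=\lceil 2V\rceil\le 3\,\mathrm{OPT}'\le 6\,\mathrm{OPT}$ and a factor $4$ for doubling, the total is $96\,\mathrm{OPT}$.
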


In particular, the competitive ratio is an absolute constant, independent of
both the dimension and the number of jobs; \(d\) and \(T\) appear only in the
smallness requirement.  We further remove the randomization.

\begin{theorem}[Informal]
For \(d\ge2\), there is a deterministic online algorithm with makespan at most
\(
    160\,\operatorname{OPT},
\)
provided that every coordinate satisfies
\[
    \max_{i,k}v_{i,k}
    \le
    \frac{1}{4000\log d\log T} .
\]

\end{theorem}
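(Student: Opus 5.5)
We prove the deterministic online bound by derandomizing the randomized online algorithm with a pessimistic estimator. The device is an exponential potential in the style of Azar et al.\ for small-vector bin packing. We maintain an estimate \(\Lambda\) of \(\operatorname{OPT}\) by doubling, and use the lower bound \(\operatorname{OPT}\ge \max\{V,\max_i s_i\}\) together with the current volume \(V\) to certify when \(\Lambda\) is too small. Each doubling opens a fresh phase placed after all previously scheduled jobs. The geometric sum of phase lengths then costs only a constant factor, as in the randomized result.

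Within a phase we round processing times up to powers of two, which loses a factor of \(2\). This gives \(\lceil\log T\rceil\) duration classes. For each class \(j\) (length \(2^j\)) we partition the time axis into aligned slots of length \(2^j\). We also reserve for class \(j\) a \(1/\lceil \log T\rceil\) fraction of every resource. Jobs of one class started at slot boundaries never partially overlap one another, so within a class each slot is exactly a \(d\)-dimensional bin of capacity \(1/\log T\) (after rescaling). Across classes the reservations add to at most \(1\), so feasibility holds at every time. The smallness assumption \(v_{i,k}\le 1/(4000\log d\log T)\) is precisely the statement that, after this rescaling, every coordinate is at most \(1/(4000\log d)\). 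That is the regime of the deterministic small-vector bin-packing algorithm.

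For the online assignment inside class \(j\), we place an arriving job in a slot that minimizes the increase of the potential
\[
\Phi=\sum_{\text{slots } b}\sum_{k\in[d]} \alpha^{L_{b,k}\cdot \log T}
\]
among the slots whose load stays at most \(1/\log T\). Here \(\alpha=1+c\) for a suitable constant \(c\). We choose the number of slots available to class \(j\) in a phase of length \(\Theta(\Lambda)\) so that a uniformly random slot keeps the expected potential increase bounded; this is the computation from the randomized analysis. The key steps are then as follows. First, show that the greedy choice never increases \(\Phi\) more than a random choice would, by the standard averaging argument. Second, bound \(\Phi\) by \(O(d\cdot \#\text{slots})\) as long as the class volume is at most \(O(\Lambda)\). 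Third, conclude that a slot with all loads near capacity appears only when the phase is genuinely overfull, which triggers a doubling. To make the exponent's range and \(\log d\) interact correctly, we use the Chernoff-type bound \(\alpha^{x}\le 1+(\alpha-1)x\) for \(x\in[0,1]\) per coordinate, with \(x=v_{i,k}\log T\cdot\)const.

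The main obstacle is the second step. The potential must be bounded per dimension without union-bound slack, so that a slot becoming full forces \(\alpha^{\Theta(\log d)}\) to exceed \(d\) times the average. This is where the constant \(4000\) and the condition \(d\ge 2\) enter: we need \(\log d\ge 1\) to absorb additive constants. I would first try to copy Azar et al.'s potential argument slot by slot and verify the constants. Finally, we sum over the \(\log T\) classes, each with makespan \(O(\Lambda)\); the classes run concurrently, since each occupies its own resource share. Combined with the factor \(2\) from rounding and the doubling overhead, this gives the constant \(160\).
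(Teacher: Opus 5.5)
\textbf{Gap: the static capacity split costs a factor $\Theta(\log T)$.} Your proof breaks where you reserve a fixed $1/\lceil\log T\rceil$ share of every resource for each duration class. This rules out a $T$-independent competitive ratio. Take $n$ unit-length jobs, each with all coordinates equal to $\delta=1/(4000\log d\log T)$, plus one negligible job of length $1/T$. Then $V\approx n\delta$ and $\mathrm{OPT}\le(1+2\delta)V+2$. However, the unit-length class only ever has capacity $1/\lceil\log T\rceil$ per resource, so your schedule needs time at least about $V\lceil\log T\rceil$.

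Equivalently, your second step is false as stated. In a phase of length $\Theta(\Lambda)$, class $j$ has $\Theta(\Lambda/2^j)$ slots of capacity $1/\log T$. That is only $\Theta(\Lambda/\log T)$ resource--time per coordinate. So the potential cannot stay $O(d)$ per slot once the class volume exceeds a constant times $\Lambda/\log T$, and your doubling rule then drives $\Lambda$ up to $\Theta(V\log T)$. The $\log T$ in the smallness condition is not there to pay for splitting capacity among classes. In the paper it pays for a union bound over the up to $T$ points of the $\frac{1}{T}$-grid that a single job's interval can cover.

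\textbf{What the paper does instead.} All classes share the full capacity. The potential is indexed by every grid time $t\in\frac{1}{T}\mathbb{Z}\cap[0,q)$ and every coordinate $k$, not by per-class slots. Its exponent is $L_{t,k}$ minus $\alpha$ times the load of the uniform fractional schedule, which has $1/2$-slack when $\Bar{M}\ge 2V$. The averaging argument keeps the total potential at most $Tqd$. The inequality $\alpha^{Q/40}\ge Td^6$ is exactly where $Q=4000\log d\log T$ enters.

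\textbf{Second issue: restricting the greedy to fitting slots.} Minimizing the potential increase only over slots where the job fits is not covered by the averaging argument, which averages over all support points. A non-fitting point can have a tiny increase when the job's offending coordinate is tiny, so the restricted minimum can exceed the average.

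The paper instead takes the unrestricted minimizer as a tentative start. Jobs that do not fit go to an overflow pool scheduled by the shelf algorithm. The accounting works as follows. Every coordinate is first raised to at least $\max_{k'}v_{i,k'}/d^2$. Each overflowing job is then charged to an overloaded pair $(t,k)$. The potential bound gives overflow volume $O(q/d^2)$, which absorbs the factor $d$ lost by the shelf algorithm. Your plan has no counterpart of this overflow accounting; it assumes a placement fails only when the phase is genuinely overfull.
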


Taken together, these results extend the small-vector phenomenon from static
multidimensional packing to temporal resource scheduling: small vectors allow
a near-optimal offline approximation and constant-competitive online
algorithms even when jobs have heterogeneous processing times.

\subsection{Techniques}
Our offline algorithm is based on a simple random-placement principle.  
We first round processing times geometrically and consider the jobs from
longer to shorter processing times.  Let
\[
    V=\max_k\sum_i t_i v_{i,k}
\]
be the max-coordinate resource--time volume after rounding.  We create a
time horizon of length only slightly larger than \(V\), and independently
assign every job a random start time aligned with its rounded processing time.  If
the job fits throughout its proposed interval, we keep the assignment;
otherwise, we send the job to an \emph{overflow pool}.

The main difficulty is that a job can fail because of an overload at any time
in its execution interval.  A naive union bound over all possible times would
destroy the desired guarantee.  Geometric rounding gives us the crucial
structural property: for a job of a fixed processing time, an unsuccessful placement
can be witnessed at only a small set of representative time points.  We can
therefore apply concentration bounds at these points and union-bound over
both the relevant times and the \(d\) resource dimensions.  Small coordinates
ensure that the probability of overflow is tiny.

We then exploit an apparent mismatch in our favor.  The overflow pool can be
scheduled by a simple shelf algorithm whose worst-case guarantee loses a
factor depending on \(d\).  However, the expected resource--time volume that
reaches the overflow pool is smaller by a compensating factor.  Consequently,
the overflow contributes only an \(O(\epsilon)\) fraction of the original
volume, yielding the \((1+O(\epsilon))\) offline guarantee.  This
``random placement plus cheap repair'' viewpoint is substantially simpler
than explicitly enumerating multidimensional packing configurations.

The randomized online algorithm follows the same high-level principle, but
must address two additional obstacles.  First, the final resource--time volume
is not known in advance; we handle this using a standard doubling argument.
Second, an arriving job can interact with multiple processing-time scales.
After rounding processing times to powers of two, all relevant events lie on a
controlled temporal grid.  Concentration followed by a union bound over the
grid and the resource dimensions shows that each job enters the overflow pool
with sufficiently small probability.  A separate online shelf schedule then
absorbs the rejected jobs while losing only a constant factor.

Finally, our deterministic online algorithm replaces random start times with
a fractional schedule and an exponential potential function.  The fractional
schedule specifies a distribution over possible start times for every job.
When processing a job, we show that the expected potential under this
distribution does not increase; hence some start time in its support does not
increase the potential.  Choosing such a start time deterministically
maintains control over every resource and every relevant time point.
This converts the concentration argument underlying the randomized algorithm
into a deterministic online invariant.

\subsection{Other Related Work}

\paragraph{Vector scheduling.}
Vector scheduling has been studied extensively as models of jobs that simultaneously require several resources. Chekuri and Khanna~\cite{ChekuriKhanna2004} systematically studied \emph{vector scheduling} (VS), where \(d\)-dimensional vectors are assigned to a fixed number of machines so as to minimize the maximum load over all machines and coordinates. They gave a PTAS for VS when \(d\) is fixed and an \(O(\log^2 d)\)-approximation for arbitrary \(d\). 

The online version of vector scheduling has also received substantial attention. Meyerson, Roytman, and Tagiku~\cite{MeyersonRoytmanTagiku2013} obtained an \(O(\log d)\)-competitive algorithm, and Im, Kell, Kulkarni, and Panigrahi~\cite{ImKellKulkarniPanigrahi2015} subsequently established an asymptotically tight \(\Theta(\log d/\log\log d)\) dependence for deterministic online vector scheduling. Azar, Cohen, and Panigrahi~\cite{AzarCohenPanigrahi2018} showed that randomization does not asymptotically eliminate this dependence by proving an \(\Omega(\log d/\log\log d)\) lower bound for randomized algorithms. Thus, for unrestricted vectors, a dependence on the dimension is unavoidable even online. More recently, Gupta, Krishnaswamy, Sandeep, and Sundaresan~\cite{GuptaKrishnaswamySandeepSundaresan2023} studied fully dynamic vector scheduling and vector bin packing with recourse. For VBP, under the assumption that every coordinate is \(O(1/\log d)\), they obtain an \(O(1)\)-competitive algorithm with \(O(\log n)\) amortized recourse. Their result provides further evidence that small per-coordinate demands can fundamentally change the behavior of multidimensional packing. Recent work~\cite{das2025approximation} considered online multidimensional vector scheduling under precedence constraints and showed tight logarithmic competitive ratios.

\paragraph{Temporal and dynamic vector packing.}
Recent work has considered temporal variants of multidimensional packing. In dynamic vector bin packing, a job has a multidimensional resource demand together with an activity interval, and the algorithm decides \emph{where} to place the job among multiple bins or servers while respecting the capacity constraints of every active bin~\cite{vanBevernMelnikovSmirnovTsidulko2023,MurhekarArbourMaiRao2023}. In the model of van Bevern et al.~\cite{vanBevernMelnikovSmirnovTsidulko2023}, the start and end times of each request are part of the input and the objective is to minimize the number of bins. Murhekar et al.~\cite{MurhekarArbourMaiRao2023} instead consider online jobs that are assigned to servers upon arrival and depart after completing their processing; their objective is to minimize total server usage time.

Our problem differs in the central optimization decision. All jobs share a single \(d\)-dimensional capacity profile, and the scheduler decides \emph{when} each job executes. Choosing a start time for job \(i\) determines its activity interval \([f(i),f(i)+s_i)\), and hence which other jobs it overlaps. Thus, unlike the above dynamic packing models, the temporal overlap pattern itself is part of the optimization. The objective is to minimize makespan rather than the number or usage time of bins. This distinction is particularly important for heterogeneous processing times, since the scheduler must coordinate overlaps across multiple duration scales rather than assign already active jobs to different bins.

\section{$(1+\epsilon)$-Approximation for Offline Vector Scheduling}\label{vecsect}

We normalize the longest processing time of a vector to be 1.
Let $OPT$ denote the length of the optimal schedule when vector $v_i$ has processing time $s_i$. 

In this section, we prove \autoref{offlineapprox}, which states that for $d\geq2$ and any $\epsilon$ such that $\frac{1}{4}>\epsilon>0$ and $1/\epsilon\in\mathbb{Z}$, \autoref{vecSch} returns a schedule of expected makespan at most $(1+6\epsilon)OPT+3$ if each coordinate of every vector has size at most $\epsilon^2/(24\log(d/\epsilon))$.

\begin{theorem}\label{offlineapprox}
Let $d\geq2$ and let $\frac{1}{4}>\epsilon>0$ be such that $1/\epsilon\in\mathbb{Z}$.
There is an offline algorithm for vector scheduling that produces a solution of expected makespan at most $(1+6\epsilon)OPT+3$ for vectors whose coordinates are at most $\frac{\epsilon^2}{24\log(d/\epsilon)}$, where the maximum processing time of a vector is normalized to be 1.
\end{theorem}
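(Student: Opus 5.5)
We follow the ``random placement plus cheap repair'' outline from the techniques section. First, round every processing time $s_i$ up to $t_i=(1+\epsilon)^{-j}$ for the largest integer $j\ge 0$ with $(1+\epsilon)^{-j}\ge s_i$. Each job grows by a factor at most $1+\epsilon$, so the rounded volume satisfies $V=\max_k\sum_i t_iv_{i,k}\le(1+\epsilon)OPT$.

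Next, fix a horizon $H=\lceil(1+\epsilon)V\rceil+1$. For each class $t$, partition $[0,H)$ into consecutive aligned slots of length $t$; there are about $H/t$ of them. Process jobs from longest to shortest, and give each job an independent uniformly random slot of its class. If adding it keeps every coordinate of the load at most $1$ throughout the slot, accept the job; otherwise, send it to an overflow pool. The final schedule places the accepted jobs in $[0,H)$ and then schedules the overflow pool after time $H$ by a shelf algorithm. Group overflow jobs by rounded class, and inside each class use greedy vector packing into shelves of height $t$. Greedy opens a new shelf only when some coordinate exceeds $1/2$, so each shelf but the last has volume at least $t/(2d)$ in the sense that the sum over coordinates of $\sum v_{i,k}t$ is at least $t/2$. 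This gives overflow makespan at most $2d\,\max(\text{summed-coordinate volume})+\sum_j(1+\epsilon)^{-j}$. The summed last-shelf term is at most $1/(1-1/(1+\epsilon))\approx 1/\epsilon$, which is too big. To avoid this, I would instead bound the overflow by the max-coordinate volume times $O(1)$ plus an $O(1)$ term, packing all overflow classes together by a next-fit-decreasing shelf scheme. Shelves have geometrically decreasing heights, so the tail is at most $1+\epsilon+\ldots$. Together with rounding slack, this accounts for the additive $3$.

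The core is bounding the expected overflow volume. Fix job $i$ with class $t$ and a candidate slot $I$. At any time $\tau\in I$, the load comes from jobs of length $\ge t$ placed earlier. A job of class $t'\ge t$ covers $\tau$ with probability at most $t'/H$, which is about $t'/H$ per slot, so the expected load on coordinate $k$ at $\tau$ is at most $V_k/H\le 1/(1+\epsilon)$. Load from longer classes is constant on each slot of class $t$ only if slots are nested. Since the lengths are powers of $1+\epsilon$ rather than of $2$, I would instead witness failure at the at most $O(\log_{1+\epsilon}(\cdot))$ slot-boundary points of longer classes intersecting $I$. More simply, I would note that the load on $I$ changes only at endpoints of longer-class slots, and each longer class contributes at most two breakpoints inside $I$. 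Taking class counts of order $\log(1/\epsilon)/\epsilon$ together with the additive term, this gives $O(\epsilon^{-1}\log(1/\epsilon))$ witness points. At each point, the loads are sums of independent terms bounded by $\delta=\epsilon^2/(24\log(d/\epsilon))$. A Chernoff bound with deviation $\epsilon/(1+\epsilon)$ gives failure probability $\exp(-\Omega(\epsilon^2/\delta))=(\epsilon/d)^{\Omega(1)}$, with the constant $24$ chosen to make the exponent at least $3$. A union bound over the $d$ coordinates and the witness points then gives $\Pr[i\text{ overflows}]\le\epsilon/d$.

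Hence the expected summed-coordinate overflow volume is at most $(\epsilon/d)\sum_k V_k\le\epsilon V$. Even the crude $d$-lossy shelf bound then contributes only $O(\epsilon)V$ plus the additive term. Combining everything, $\E[C_{\max}]\le(1+\epsilon)V+2+O(\epsilon)V\le(1+6\epsilon)OPT+3$ after tracking constants.

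I expect the main obstacle to be the witness-point argument, namely controlling the number of times at which a non-nested longer job can overload $I$. If the $(1+\epsilon)$-geometric grid does not nest, my fallback is to round times to powers of $2$ for slot alignment, which guarantees nesting so that only one witness per longer class is needed. The volume loss from this coarser rounding would be handled by splitting each power-of-$2$ slot into $1/\epsilon$ sub-slots, which is presumably why $1/\epsilon\in\mathbb{Z}$ is assumed.
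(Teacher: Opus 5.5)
\textbf{The gap: the witness-point count depends on $T$.} The problem is in the step you yourself flag as the main obstacle, namely bounding the number of witness points inside the slot $I$ of job $i$.

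With lengths rounded to powers of $1+\epsilon$, every longer class $s_j=(1+\epsilon)^{-j}\ge t$ can place one breakpoint (a multiple of $s_j$) inside $I$. Your per-class count therefore only yields $|W(I)|\le j_i+1$. But $j_i$ can be as large as $\log_{1+\epsilon}T$, where $T$ is the ratio of longest to shortest processing time. The theorem puts no bound on $T$: the threshold $\epsilon^2/(24\log(d/\epsilon))$ contains no $T$.

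Nothing in the proposal turns this into the claimed $O(\epsilon^{-1}\log(1/\epsilon))$; the phrase ``class counts of order $\log(1/\epsilon)/\epsilon$ together with the additive term'' is not justified for a fixed slot. The union bound then only gives $\Pr[i\text{ overflows}]\le d\cdot O(\log_{1+\epsilon}T)\cdot(\epsilon/d)^{c}$, which is not $O(\epsilon/d)$ uniformly in $T$. The fallback does not repair this as written. ``One witness per longer class'' is again a $T$-dependent count, and the sub-slot construction is left unspecified.

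\textbf{What the paper does.} What is needed is a witness bound independent of the number of classes. The paper gets a deterministic one by rounding to powers of $2^{\epsilon}$ with $g=1/\epsilon\in\mathbb{Z}$, writing $t_j=2^{q_j+r_j/g}$ with $0\le r_j<g$. Load only increases at start times, so the first overload in $[t,t+t_i)$ occurs either at $t$ or at the start of an earlier (hence longer) job $j$. Since $q_j\ge q_i$, that start is a multiple of $2^{q_i+r_j/g}$. For each of the $g$ residues, such multiples are at least $2^{q_i}>t_i/2$ apart, so a fixed set of at most $2/\epsilon$ points suffices. This is the actual role of the hypothesis $1/\epsilon\in\mathbb{Z}$.

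\textbf{How your route could be completed.} Your $(1+\epsilon)$ route could instead be finished by averaging rather than by a deterministic count.
\begin{itemize}
\item Condition on $I$, and union-bound over $W(I)$. This set is deterministic given $I$, and the earlier jobs' slots are independent of $I$.
\item Then average over the uniform choice of $I$. A random aligned slot of length $t$ contains in expectation at most $t/s_j+1/N_t$ multiples of $s_j$, where $N_t\ge 1/t$ is the number of slots.
\item Hence $\mathbb{E}|W(I)|\le 1+1/\epsilon+j_i(1+\epsilon)^{-j_i}=O(1/\epsilon)$.
\end{itemize}
One of these two arguments must be supplied, since it is the heart of the theorem.

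\textbf{A minor point on the overflow step.} A next-fit shelf scheme is bounded only in terms of the summed-coordinate volume, which can be up to $d$ times the max-coordinate volume, not $O(1)$ times it. Its additive term is the first shelf height, at most $1$, rather than a geometric tail. The $d$-loss is harmless because the overflow probability carries a $1/d$ factor, which is exactly how the paper uses its shelf lemma.
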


Round all processing times up to the next power of $2^\epsilon$. 
Since $2^\epsilon\leq1+\epsilon$, this increases every processing time by a factor of at most $1+\epsilon$.

Let $t_i$ denote the processing time of $v_i$ after rounding.
Let $OPT'$ denote the length of the optimal schedule when vector $v_i$ has processing time $t_i$.
The optimum amount of time required to schedule these vectors increases by at most a factor of $(1+\epsilon)$, that is, $OPT'\leq(1+\epsilon)OPT$.

Recall that
$$
V=\max_{k\in[d]}\sum_{i=1}^n t_i v_{i,k}
$$
is the \emph{\volume} of our vectors.
One can see that $V$ is a lower bound for $OPT'$.

Let
\begin{equation}
\label{eqn:M}
M=\frac{V}{1-\epsilon}+1\leq(1+2\epsilon)V+1.
\end{equation}

\begin{algorithm}\caption{Vector scheduling algorithm}
\label{vecSch}

{\bf Input}: $\epsilon>0$, and a set of vectors $v_1,v_2,\ldots,v_n$ with processing times $t_1,t_2,\ldots,t_n$.

W.l.o.g., since the entire instance is known in advance, index the vectors so that $t_1\geq t_2\geq\cdots\geq t_n$.

For each $i=1,2,\ldots,n$, and for each discrete \stime $0\leq t<\frac{V}{1-\epsilon}$ such that $\frac{t}{t_i}\in\mathbb{Z}$, assign $t$ as the starting \stime of $v_i$ with probability $\dfrac{1}{\lceil(\frac{V}{1-\epsilon})/t_i\rceil}$, independently of the assignments of the other vectors. (That is, the starting \stime of $v_i$ is chosen uniformly from these \stimes.)

If a vector cannot fit on the strip when started at its assigned starting \stime, put it in an ``overflow pool''.

Round the processing time of each vector in the overflow pool up to the next largest power of 2.

Apply Algorithm \ref{shelf} to the vectors in the overflow pool (with rounded processing times), and append the resulting schedule after the schedule of the vectors that fit on the strip.
\end{algorithm}

\paragraph{Vector scheduling algorithm.}
Our vector scheduling algorithm (Algorithm \ref{vecSch}) works as follows.
Since the entire instance is known in advance, we may index the vectors so that $t_1\geq t_2\geq\ldots\geq t_n$.

For each $i=1,2,\ldots,n$, and for each \stime $0\leq t<\frac{V}{1-\epsilon}$ such that $\frac{t}{t_i}\in\mathbb{Z}$, assign $t$ as the starting \stime of $v_i$ with probability $\dfrac{1}{\lceil(\frac{V}{1-\epsilon})/t_i\rceil}$, independently of the assignments of the other vectors. As a consequence, the starting \stime of $v_i$ is chosen uniformly from these possible \stimes.

If a vector cannot fit on the strip when started at its assigned \stime, we put it in an ``overflow pool'', round its processing time to the next largest power of 2, and apply the Shelf Algorithm, \autoref{shelf}, to the vectors in the overflow pool. We append the resulting schedule after the schedule of the vectors that fit on the strip.

Remark: We note that it is straightforward to implement the algorithm in almost linear time (i.e. $O(dn\log n)$).

\begin{algorithm}\caption{Shelf algorithm}
\label{shelf}

{\bf Input}: A sequence of vectors $\Bar{v}_1,\Bar{v}_2,\ldots$ where $\Bar{v}_i$ has a processing time $t_i$ (which is a power of 2).

Replace each vector with a larger vector for which all coordinates are the same and are equal to the maximum coordinate $q_i=\max_k\Bar{v}_{i,k}\leq\sum_k\Bar{v}_{i,k}$.

For each modified vector, if there exists a shelf with processing time $t_i$ such that the vector fits there, then put the vector there. 
Otherwise open a new shelf with processing time $t_i$ and put the vector in the newly opened shelf.
\end{algorithm}

\begin{lemma}
Suppose that \autoref{vecSch} tentatively assigns a vector $v_i$ to \stime $t$. 
There is a set of \stimes $\tau_i(t)\subset[t,t+t_i)$ of size at most $2/\epsilon$ such that, if $v_i$ cannot fit at \stime $t$, then there is some \stime $t'\in\tau_i(t)$ such that the schedule is overloaded at \stime $t'$, that is, for some $k\in[d]$,
$$
\sum_{v_j\text{ already scheduled, }\,t'\in[f(j),f(j)+t_j)}v_{j,k}+v_{i,k}>1.
$$
\end{lemma}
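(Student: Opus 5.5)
Since vectors are processed in nonincreasing order of rounded processing time, every already-scheduled vector $v_j$ has $t_j\ge t_i$, with $t_j=2^{\epsilon a_j}$ and $t_i=2^{\epsilon a_i}$ for integers $a_j\ge a_i$. Each scheduled $v_j$ started at an integer multiple of $t_j$. The plan is to show that the load profile restricted to $[t,t+t_i)$ is piecewise constant, with breakpoints only at the start and end times of longer jobs. We then bound the number of such breakpoints inside the window $[t,t+t_i)$. Finally, we let $\tau_i(t)$ consist of $t$ together with these breakpoints. Load is constant between consecutive breakpoints, and the right-continuous intervals $[f(j),f(j)+t_j)$ mean each maximal constant piece begins at a point of $\tau_i(t)$. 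Hence any overload at some $t''\in[t,t+t_i)$ is witnessed at the left endpoint of the piece containing $t''$, which lies in $\tau_i(t)$.

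The key step is counting the breakpoints. Every breakpoint in $(t,t+t_i)$ is the start or end time of some scheduled $v_j$. Both are integer multiples of $t_j$, since $f(j)+t_j$ is again a multiple of $t_j$. So the set of breakpoints is contained in $\bigcup_{\ell}\{m\,2^{\epsilon\ell}: m\in\mathbb{Z}\}\cap(t,t+t_i)$, with $\ell$ ranging over the exponents $\ell\ge a_i$ that actually occur.

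The classes split into two kinds. For $\ell\ge a_i+1/\epsilon$ we have $2^{\epsilon\ell}\ge 2t_i$, so each such grid contains at most one point in an open window of length $t_i$. These points do not automatically coincide across different $\ell$, so I will handle them through the $2/\epsilon$ budget. For $a_i\le \ell< a_i+1/\epsilon$ there are exactly $1/\epsilon$ classes, and each grid spacing $2^{\epsilon\ell}\in[t_i,2t_i)$ places at most one point in $(t,t+t_i)$, because $t$ is itself a multiple of $t_i$.

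The main obstacle is the many long classes $\ell\ge a_i+1/\epsilon$, each of which could contribute a point, so a naive count is unbounded. I would try to replace the per-class count with a per-job-structure argument. A job with $t_j\ge 2t_i$ that has a breakpoint inside $(t,t+t_i)$ must start or end there. One could hope that the greedy placement order limits these, but that is not clear. My first fallback is to redefine the witness set more cleverly. For $v_i$ to fail, it suffices to check times at which the load is maximal on $[t,t+t_i)$. Between consecutive breakpoints of the fine classes (the $1/\epsilon$ classes with spacing below $2t_i$), load contributions from long jobs change only at their own endpoints.

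The alternative I would pursue is to note that long grids are nested in a coarse sense. Since $2^{\epsilon\ell}$ for $\ell=a_i+m/\epsilon+r$ is $2^m$ times a fine spacing $2^{\epsilon(a_i+r)}$, each long grid is a subgrid of one of the $1/\epsilon$ fine residue grids. Multiples of $2^m s$ are multiples of $s$. Therefore all breakpoints lie in the union of the $1/\epsilon$ fine grids. Each fine grid contributes at most one point in $(t,t+t_i)$, and with $t$ added this gives at most $1/\epsilon+1\le 2/\epsilon$ points. This nesting observation is the crux, and it resolves the obstacle above.
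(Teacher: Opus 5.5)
Your argument is correct and is essentially the paper's: every earlier job has $t_j\ge t_i$ with exponent in $\epsilon\mathbb{Z}$, so each breakpoint, being a multiple of $t_j$, is a multiple of one of $1/\epsilon$ base spacings (namely $t_j$ divided by a power of $2$), and one then counts the points of each base grid in the window. You take residues relative to $a_i$ (spacings in $[t_i,2t_i)$, at most one point each) rather than the paper's $q_i=\lfloor\epsilon a_i\rfloor$ (spacings in $[2^{q_i},2^{q_i+1})$, at most two each), which gives the sharper bound $1/\epsilon$; but define $\tau_i(t)$ as this fixed grid union rather than the realized breakpoints, because the next lemma union-bounds over deterministic times $t'$.
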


\begin{proof}
Recall that each $t_j$ is equal to $2^{\epsilon w_j}$ for some $w_j\in\mathbb{Z}$. Let $g=1/\epsilon$, which is an integer. For each $j$, write $ w_j=q_jg+r_j,
$ where $q_j\in\mathbb{Z}$ and $0\leq r_j<g$. 
So, $ t_j=2^{q_j+r_j/g}. $

Define
$$
\tau_i(t):=
\{m2^{q_i+r'/g}:r'\in[g-1]\cup\{0\},\,m\in\mathbb{Z},\,
t\leq m2^{q_i+r'/g}<t+t_i\}.
$$

We first show that if $v_i$ cannot fit when started at \stime $t$, then the addition of $v_i$ causes an overload at some \stime in $\tau_i(t)$.

Suppose that $v_i$ cannot fit when started at \stime $t$. If the addition of $v_i$ already causes an overload at $t$, then $t\in\tau_i(t)$, since $t$ is an admissible starting \stime for $v_i$ and hence is a multiple of
$ t_i=2^{q_i+r_i/g}. $

Otherwise, let $t'>t$ be the first \stime in $[t,t+t_i)$ at which the addition of $v_i$ causes an overload. The load on the strip can change only at the starting and completion \stimes of previously scheduled vectors. Moreover, the completion of a vector can only decrease the load. Therefore, $t'$ must be the starting \stime of some previously scheduled vector $v_j$.

Since the vectors are considered in nonincreasing order of processing time, $t_j\geq t_i$. Hence $w_j\geq w_i$, which implies $q_j\geq q_i$. Since $t'$ is an admissible start time of $v_j$, for some $m\in\mathbb{Z}$,
$
t'=m t_j
=m2^{q_j+r_j/g}
=  (m2^{q_j-q_i}   )2^{q_i+r_j/g}.
$
Since $q_j-q_i$ is a nonnegative integer, $m2^{q_j-q_i}\in\mathbb{Z}$. Thus, $t'$ is a multiple of $2^{q_i+r_j/g}$, and therefore $t'\in\tau_i(t)$.

It remains to bound the size of $\tau_i(t)$. For every $r'\in[g-1]\cup\{0\}$, consecutive multiples of $2^{q_i+r'/g}$ are separated by at least $2^{q_i}$. On the other hand,
$ t_i=2^{q_i+r_i/g}<2^{q_i+1}. $
Therefore, the interval $[t,t+t_i)$ contains at most two multiples of $2^{q_i+r'/g}$ for each $r'$. Since there are $g=1/\epsilon$ possible values of $r'$, we obtain
$ |\tau_i(t)|\leq 2g=\frac{2}{\epsilon}. $
\end{proof}

\begin{lemma}
The probability that a vector $v_i$ is placed in the overflow pool is at most $2\epsilon/d^3$.
\end{lemma}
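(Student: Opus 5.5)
By the preceding lemma, if $v_i$ lands in the overflow pool at tentative start $t$, then some $t'\in\tau_i(t)$ and some $k\in[d]$ satisfy $\sum_{j<i,\,t'\in[f(j),f(j)+t_j)} v_{j,k} > 1 - v_{i,k}$. Here the sum runs over previously placed vectors that were kept on the strip. I will overcount by including every earlier vector $v_j$ whose \emph{tentative} interval covers $t'$, whether or not it was kept. This only enlarges the sum, so it suffices to bound, for each fixed $t'$ and $k$, the probability that $X_{t',k}:=\sum_{j<i} v_{j,k}\mathbf{1}[t'\in[f(j),f(j)+t_j)]$ exceeds $1-v_{i,k}\ge 1-\epsilon^2/(24\log(d/\epsilon))$. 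Then I will take a union bound over the at most $2/\epsilon$ points of $\tau_i(t)$ and the $d$ coordinates. The set $\tau_i(t)$ depends on $t$, which is the random start of $v_i$. I will condition on $t$; since start times are independent, the $f(j)$ for $j<i$ keep their distribution, so the bound holds conditionally and hence unconditionally.

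\textbf{Expectation.} For fixed $t'<V/(1-\epsilon)$, the start of $v_j$ is uniform over $\lceil (V/(1-\epsilon))/t_j\rceil\ge (V/(1-\epsilon))/t_j$ aligned slots. The aligned intervals partition the line, so exactly one slot covers $t'$, and
\[
\Pr[t'\in[f(j),f(j)+t_j)]\le \frac{t_j(1-\epsilon)}{V}.
\]
Hence $\E[X_{t',k}]\le (1-\epsilon)\sum_j t_jv_{j,k}/V\le 1-\epsilon$. The summands are independent and lie in $[0,\delta]$ with $\delta=\epsilon^2/(24\log(d/\epsilon))$.

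\textbf{Concentration.} Rescaling by $1/\delta$ gives independent $[0,1]$ variables with mean $\mu\le(1-\epsilon)/\delta$. I need the sum to exceed $(1-\delta)/\delta$. Since $\delta\le\epsilon^2/48$, this threshold is at least $(1+\epsilon/2)(1-\epsilon)/\delta$. I will apply the Chernoff bound in the form valid for an upper bound on the mean, $\Pr[Y\ge(1+\eta)\mu_{\max}]\le e^{-\eta^2\mu_{\max}/3}$, with $\eta\approx\epsilon/2$ and $\mu_{\max}=(1-\epsilon)/\delta$. This gives
\[
\exp\!\Big(-\frac{\epsilon^2(1-\epsilon)}{12\delta}\Big)=\exp\!\Big(-2(1-\epsilon)\log(d/\epsilon)\Big).
\]
With $\epsilon<1/4$ the exponent is at least $1.5\log(d/\epsilon)$ in base 2, converting $e$ to $2$ with $\log_2 e>1.44$. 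This yields a bound of roughly $(\epsilon/d)^{2.1}$ or better.

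\textbf{Union bound and the main obstacle.} The union bound multiplies by $(2/\epsilon)\cdot d$, and the target is $2\epsilon/d^3$. So each per-point, per-coordinate bound must be at most $\epsilon^2/d^4$, i.e.\ the Chernoff exponent must reach about $4\log(d/\epsilon)$. Doing this arithmetic tightly is the main obstacle, and the constant $24$ with the derived gap $\eta$ may be too weak.

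If the estimate above falls short, I will first widen the gap. I will use $1-\delta\ge 1-\epsilon/48$ together with mean at most $1-\epsilon$, which gives relative deviation $\eta\ge \epsilon(1-1/48)/(1-\epsilon)\ge\epsilon$ rather than $\epsilon/2$. I will then use the sharper bound $e^{-\eta^2\mu/(2+\eta)}$ with $\mu\ge 1/\delta$-scale, taking care of the monotonicity in $\mu$ by padding with dummy variables up to mean exactly $(1-\epsilon)/\delta$. This gives an exponent of about $\epsilon^2/(2.25\delta)\approx 10\log(d/\epsilon)$ in natural-log terms. That comfortably exceeds $4\log_2(d/\epsilon)$ and yields at most $\epsilon^2/d^4$ per pair. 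Summing over the $2d/\epsilon$ pairs then gives $2\epsilon/d^3$.
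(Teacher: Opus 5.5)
Your proposal is correct and is essentially the paper's proof: condition on the start $t$ of $v_i$, overcount the load at each $t'\in\tau_i(t)$ using tentative placements (including vectors that later overflowed), apply a Chernoff bound to get $\epsilon^2/d^4$ per pair $(t',k)$, and take a union bound over the at most $2d/\epsilon$ pairs. Your first choice $\eta=\epsilon/2$ does fall short, as you noticed. Your fallback gap $\eta=(\epsilon-\delta)/(1-\epsilon)\ge\epsilon$ holds because $\delta\le\epsilon^2$; the bound $\delta\le\epsilon/48$ you wrote only gives $\eta\ge\epsilon$ when $\epsilon\ge 1/48$. With this $\eta$, the plain bound $e^{-\eta^2\mu/3}$ already suffices and gives exponent $8(1-\epsilon)\log(d/\epsilon)$, so the sharper form is unnecessary. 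The paper instead takes $\mu=1/(\delta(1+\epsilon))$ and $\eta=11\epsilon/12$.
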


\begin{proof}
Fix a possible start time $t$ of $v_i$ and condition on $v_i$ being assigned \stime $t$. Since the start times of the vectors are assigned independently, this conditioning does not affect the distribution of the start times assigned to the vectors considered before $v_i$.

For each vector $v_j$ considered before $v_i$, each potential start time is assigned to $v_j$ with probability $\dfrac{1}{\lceil(\frac{V}{1-\epsilon})/t_j\rceil}$.

We fix the value of $k$.
Consider any particular \stime $t'\in\tau_i(t)$.
Let $a_j=v_{j,k}\cdot24\log(d/\epsilon)/\epsilon^2$.
Note that $0\leq a_j\leq1$.

Let $X_j$ be the indicator random variable that equals $1$ if and only if the start time assigned to vector $v_j$ is $t_j\lfloor t'/t_j\rfloor$. If this is not an admissible starting \stime for $v_j$, then $X_j=0$. Since the start times of the vectors are assigned independently, the random variables $X_j$ are independent.

Let $X=\sum_{j<i}a_jX_j$,
and let $\mu=\frac{24\log(d/\epsilon)}{\epsilon^2(1+\epsilon)}$.

Note that $X_j=1$ with probability at most $\dfrac{1}{\lceil(\frac{V}{1-\epsilon})/t_j\rceil}\leq\dfrac{(1-\epsilon)t_j}{V}$.
Substituting this into $X=\sum_{j<i}a_jX_j$ and simplifying, we deduce that
$$
\E[X]\leq\sum_{j<i}\frac{(1-\epsilon)t_j}{V}a_j
\leq\sum_{j<i}\frac{(1-\epsilon)t_j}{V}v_{j,k}\frac{24\log(d/\epsilon)}{\epsilon^2}
\leq\frac{24\log(d/\epsilon)}{\epsilon^2(1+\epsilon)}
=\mu,
$$
where the last inequality uses $1-\epsilon\leq1/(1+\epsilon)$.

Let $\eta=\frac{11\epsilon}{12}$.
Since each $0\leq a_jX_j\leq1$, we may think of $X$ as the sum of independent random variables that take values between 0 and 1.

Recall that the maximum value of each coordinate of any vector is at most $\frac{\epsilon^2}{24\log(d/\epsilon)}$. For the purpose of the analysis only, the coefficients $a_j$ scale the resource usage in dimension $k$ by the factor $24\log(d/\epsilon)/\epsilon^2$. This scaling does not change the vectors or the capacity of the scheduling problem; it only means that the random variable $X$ is a normalized representation of the resource usage of the vectors whose assigned start times cause them to overlap \stime $t'$.

Suppose that inserting $v_i$ causes an overflow in dimension $k$ at \stime $t'$. Then the previously scheduled vectors that overlap $t'$ use more than $1-v_{i,k}$ units of resource $k$. Since $v_{i,k}\leq\epsilon^2/(24\log(d/\epsilon))$, these vectors use more than $1-\epsilon^2/(24\log(d/\epsilon))$ units of resource $k$.

Every previously scheduled vector that overlaps $t'$ has an assigned start time that causes it to overlap $t'$, and hence its normalized resource usage is included in $X$. Moreover, $X$ may also include vectors whose assigned start times cause them to overlap $t'$ but which were subsequently placed in the overflow pool. Thus, if there is an overflow in dimension $k$ at \stime $t'$ when attempting to schedule $v_i$, then $X\geq\frac{24\log(d/\epsilon)}{\epsilon^2}-1$.

Under our assumptions on $\epsilon$ and $d$, $\frac{24\log(d/\epsilon)}{\epsilon^2}-1\geq(1+\eta)\mu$. Applying the Chernoff bound~\cite{Chernof}, we find that
$$
\Pr\left[X\geq\frac{24\log(d/\epsilon)}{\epsilon^2}-1\right]
\leq
\Pr[X\geq(1+\eta)\mu]
\leq
e^{-\frac{\eta^2\mu}{3}}
=
e^{-\frac{\epsilon^2\cdot11^2\cdot24\log(d/\epsilon)}
{12^2\cdot3\epsilon^2(1+\epsilon)}}
\leq
\frac{\epsilon^2}{d^4}.
$$

Thus, for a fixed $k\in[d]$ and a fixed \stime $t'\in\tau_i(t)$, the probability that the addition of $v_i$ causes an overload in dimension $k$ at \stime $t'$ is at most $\epsilon^2/d^4$.

Since this holds for each $k\in[d]$, the probability that the addition of $v_i$ causes an overload at \stime $t'$ is at most $\epsilon^2/d^3$.

Summing over all \stimes $t'\in\tau_i(t)$, and using $|\tau_i(t)|\leq2/\epsilon$, the probability that $v_i$ overflows, conditioned on being assigned start time $t$, is at most $2\epsilon/d^3$. Since this bound holds for every possible start time $t$ of $v_i$, the probability that $v_i$ is placed in the overflow pool is at most $2\epsilon/d^3$.
\end{proof}

Now we consider the makespan of the schedule returned by the shelf algorithm.

\begin{lemma}\label{shelfbound}
Let $S$ be the set of vectors given to the shelf algorithm. Suppose that each coordinate of a vector in $S$ has size at most $\delta\leq1/2$, that the processing time of each vector is a power of 2, and that the maximum processing time of a vector is normalized to be 1. Let
$\Bar{V}=\sum_{i\in S}t_i\max_{k\in[d]}v_{i,k}$ and let $\hat{V}=\max_{k\in[d]}\sum_{i\in S}t_iv_{i,k}$ be the \volume of the vectors in $S$. Then the makespan of the shelf algorithm is at most
$$
(1+2\delta)\Bar{V}+2
\leq
(1+2\delta)d\hat{V}+2.
$$
\end{lemma}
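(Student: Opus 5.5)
We analyze the shelf algorithm class by class. Group the vectors of $S$ by processing time: for $j\ge 0$, let $S_j$ be the set of vectors with $t_i=2^{-j}$. Within a class, the shelf algorithm runs first fit on scalars $q_i=\max_k v_{i,k}\le\delta$, with unit capacity per shelf. Each shelf of class $j$ is a time slot of length $2^{-j}$ in which all of its vectors run simultaneously. Since every coordinate of a modified vector equals $q_i$, a set of vectors fits on a shelf exactly when $\sum q_i\le 1$. The shelves are laid out one after another on the time axis, so the makespan is the total length of all opened shelves.

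First, I will bound the number of shelves in class $j$. Under first fit, every shelf except possibly the last has load greater than $1-\delta$. Otherwise, the item that opened the next shelf, having size at most $\delta$, would have fit into it. This holds whether items are placed into any fitting shelf or the first fitting one, since a new shelf is opened only when the item fits in none. So if $m_j$ shelves are opened in class $j$, then
\[
(m_j-1)(1-\delta)<\sum_{i\in S_j}q_i .
\]
Using $\frac{1}{1-\delta}\le 1+2\delta$ for $\delta\le 1/2$, this gives $m_j\le (1+2\delta)\sum_{i\in S_j}q_i+1$.

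Next, I multiply by the shelf length $2^{-j}$ and sum over $j$. The first terms add up to $(1+2\delta)\sum_j\sum_{i\in S_j}t_iq_i=(1+2\delta)\Bar V$. The additive terms contribute at most $\sum_{j\ge0}2^{-j}=2$, and this is where the normalization to a maximum processing time of $1$ and the power-of-$2$ processing times are used. This yields the first bound, $(1+2\delta)\Bar V+2$.

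For the second inequality, I use $\max_k v_{i,k}\le\sum_k v_{i,k}$. This gives
\[
\Bar V\le\sum_k\sum_{i\in S}t_iv_{i,k}\le d\max_k\sum_{i\in S}t_iv_{i,k}=d\hat V .
\]

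I expect the only delicate point to be the accounting of the partially filled last shelf in each class. Bounding it by one shelf per class and summing the geometric series is exactly what produces the additive $2$. It is also why the shelves of different classes must not share time.
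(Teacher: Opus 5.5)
Your proof is correct and is essentially the paper's argument. Within each processing-time class, every shelf except one is more than $(1-\delta)$-full; the paper phrases this as ``at most one shelf per length has more than $\delta$ unused capacity.'' From there, $1/(1-\delta)\le 1+2\delta$ and the geometric sum $\sum_j 2^{-j}\le 2$ give the first bound, and $\max_k v_{i,k}\le\sum_k v_{i,k}$ gives $\Bar V\le d\hat V$. The only nit is that $(m_j-1)(1-\delta)<\sum_{i\in S_j}q_i$ should be non-strict, since it fails when $m_j=1$ and all $q_i=0$; this does not affect the conclusion.
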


\begin{proof}
For each vector $v_i\in S$, let $q_i=\max_{k\in[d]}v_{i,k}$, and let $S'$ be the modified vectors obtained from $S$ by replacing each vector $v_i$ with a larger vector $\Bar{v}_i$ for which all coordinates are equal to $q_i$. Then
$$
V(S')=\sum_{i\in S}t_iq_i=\Bar{V}.
$$
Moreover, since $q_i\leq\sum_{k\in[d]}v_{i,k}$,
$$
\Bar{V}
\leq\sum_{k\in[d]}\sum_{i\in S}t_iv_{i,k}
\leq d\hat{V}.
$$

The schedule returned by the shelf algorithm consists of consecutive shelves, where each vector fits inside exactly one shelf. For each shelf length, there is at most one shelf that has more than $\delta$ unused capacity. Indeed, if two shelves of the same length each had more than $\delta$ unused capacity, then the first vector placed in the later shelf, whose size is at most $\delta$, would have fit in the earlier shelf.

Let $\tau_i$ denote the number of shelves of length $2^{-i}$.
The total \volume of vectors packed inside shelves of length $2^{-i}$ is at least $2^{-i}(1-\delta)(\tau_i-1)$.
This is because, except for at most one shelf of each length, every shelf has at least $1-\delta$ of its capacity filled.
Hence,
$$
\sum_{i\geq0}2^{-i}(1-\delta)(\tau_i-1)\leq\Bar{V}.
$$

Note that $1/(1-\delta)\leq(1+2\delta)$. By rearranging the terms we get that the total length is at most
$$
\sum_{i\geq0}2^{-i}\tau_i
\leq(1+2\delta)\Bar{V}+\sum_{i\geq0}2^{-i}
\leq(1+2\delta)\Bar{V}+2
\leq(1+2\delta)d\hat{V}+2.
$$
\end{proof}

\begin{proof} (of \autoref{offlineapprox})

We now show that \autoref{vecSch} produces a solution of makespan at most $(1+6\epsilon)OPT+3$, in expectation.

Let $S$ be the random set of vectors placed in the overflow pool, and let $q_i=\max_kv_{i,k}$. Since each vector overflows with probability at most $\frac{2\epsilon}{d^3}$,
$$
\E\left[\sum_{i\in S}t_iq_i\right]
=
\sum_i t_iq_i\Pr[i\in S]
\leq
\frac{2\epsilon}{d^3}\sum_i t_iq_i
\leq
\frac{2\epsilon}{d^2}V.
$$

Before applying the Shelf Algorithm, we round the processing time of each vector in the overflow pool to the next largest power of 2. This increases every such processing time by a factor of less than 2. Therefore, if $\Bar{V}_S$ denotes $\sum_{i\in S}t_iq_i$ after this rounding, then $\E[\Bar{V}_S]\leq\frac{4\epsilon}{d^2}V$.

In our application of \autoref{shelfbound}, take $\delta=\epsilon^2/(24\log(d/\epsilon))$. By \autoref{shelfbound}, the Shelf Algorithm adds an expected extra processing time of at most $(1+2\delta)\E[\Bar{V}_S]+2\leq(1+2\delta)\frac{4\epsilon}{d^2}V+2$. Since $\delta\leq1/2$, this is at most $\frac{8\epsilon}{d^2}V+2$.

Every vector that is not placed in the overflow pool starts before $\frac{V}{1-\epsilon}$ and has processing time at most 1. Thus, the makespan of this part of the schedule is at most $M$. Therefore, the expected length of the entire schedule is at most
$$
M+\frac{8\epsilon}{d^2}V+2
\leq(1+2\epsilon)V+1+2\epsilon V+2
\leq(1+4\epsilon)V+3,
$$
where the first inequality uses the bound from Equation~\eqref{eqn:M} and the fact that $d\geq2$.

Finally, $V\leq OPT'\leq(1+\epsilon)OPT$. Therefore,
$$
(1+4\epsilon)V+3
\leq
(1+4\epsilon)(1+\epsilon)OPT+3
\leq
(1+6\epsilon)OPT+3,
$$
where the last inequality follows from $\epsilon\leq1/4$.

\end{proof}

\section{Online with Smaller Vector Size}\label{onlinesmall}

In this section, we consider the online version of the vector scheduling problem given in the Introduction.

Let $T$ be the ratio of the maximum vector processing time to the minimum vector processing time (we round it to the next power of 2). As before, we normalize the maximum processing time of a vector to be 1. Hence the minimum processing time is at least $1/T$. We assume that $d,T\geq2$ and that all logarithms are base 2.

First, we round up the processing time of each vector to the next largest power of 2. 

Let $t_i$ denote the processing time of $v_i$ after rounding.
Let $OPT'$ denote the length of the optimal schedule when vector $v_i$ has processing time $t_i$.
The optimal makespan increases by at most a factor of 2, that is, $OPT'\leq2OPT$.

Let
$$
V=\max_{k\in[d]}\sum_{i=1}^n t_i v_{i,k}
$$
be the overall \emph{\volume} of our vectors after rounding the processing times.
One can see that $V$ is a lower bound for $OPT'$.

Let
$$
M=\left\lceil2V\right\rceil\leq2V+1.
$$

Our algorithm uses $M$ as a parameter. However, for computing $M$ we need to know the set of vectors in advance, which is not given to us. To overcome this, we use the classical folklore doubling method. Specifically, we maintain a guess for $M$ and start a new phase with a doubled guess whenever the current guess is no longer sufficient. As usual, this increases the competitive ratio by at most a factor of $4$. Thus, we first analyze the algorithm assuming that a sufficiently large value of $M$ is known.

\paragraph{\textbf{Online algorithm.}}

First, we define the algorithm for a certain choice of $\Bar{M}\in\mathbb{N}$.
This algorithm, \autoref{randalg1}, is given a choice of $\Bar{M}\in\mathbb{N}$ such that $\Bar{M}\geq\lceil2V\rceil$.

Since each $t_i$ is a power of 2 with $1/T\leq t_i\leq1$, and both $T$ and $\Bar{M}$ are integers, $\Bar{M}/t_i\in\mathbb{Z}$. Thus there are exactly $\Bar{M}/t_i$ \stimes $0\leq t<\Bar{M}$ such that $t/t_i\in\mathbb{Z}$.

When vector $v_i$ arrives, independently of all previous random choices, for each \stime $0\leq t<\Bar{M}$ such that $t/t_i\in\mathbb{Z}$, assign $t$ as the tentative starting \stime of $v_i$ with probability $t_i/\Bar{M}$. Equivalently, one of these \stimes is chosen uniformly at random.
We call the selected \stime the \emph{tentatively assigned} \stime of $v_i$.

If a vector cannot fit at its tentatively assigned \stime, we put it in an ``overflow pool''.

We use Algorithm \ref{shelf} (the Shelf algorithm) for the overflow pool, starting the schedule for the overflow pool at \stime $\Bar{M}$.
Since the starting \stime of the schedule for the overflow pool is known in advance and the Shelf algorithm can be run online, each vector in the overflow pool can be assigned a \stime as soon as it is placed in the overflow pool. 
Thus, the algorithm does not need to wait for future vectors to arrive before scheduling an overflow vector.

\begin{algorithm}
\caption{Vector schedule for given \volume.}
\label{randalg1}

{\bf Input}: $\Bar{M}$, and a sequence of vectors $v_1,v_2,\ldots,v_m$ of \volume $V$, with processing times $t_1,t_2,\ldots,t_m$, that have maximum coordinate at most $1/(100\log d\log T)$ and arrive online, such that $\Bar{M}\geq\lceil2V\rceil$.

When vector $v_i$ arrives, independently of all previous random choices, for each \stime $0\leq t<\Bar{M}$ such that $t/t_i\in\mathbb{Z}$, assign $t$ as the tentative starting \stime of $v_i$ with probability $t_i/\Bar{M}$. Equivalently, one of these \stimes is chosen uniformly at random.

If $v_i$ cannot fit at its tentatively assigned \stime, put it in an ``overflow pool''.

Use Algorithm \ref{shelf} (the Shelf algorithm) for the overflow pool, starting the schedule for the overflow pool at \stime $\Bar{M}$.

\end{algorithm}

First, we show in the next lemma that the probability that a vector is placed in the overflow pool is at most $1/d^3$.

\begin{lemma}
The probability that a vector $v_i$ is placed in the overflow pool by \autoref{randalg1} is at most $1/d^3$.
\end{lemma}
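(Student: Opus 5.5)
The plan follows the offline argument: condition on the tentative start time of $v_i$, reduce overflow to an overload at a finite set of witness times, apply a Chernoff bound at each witness time and in each dimension, and finish with a union bound. The main difference from the offline case is that vectors now arrive in arbitrary order, so earlier vectors may be shorter than $v_i$. The $2/\epsilon$-size witness set from the offline lemma is therefore no longer available, and I would replace it with a grid of size at most $T$. The $\log T$ factor in the smallness assumption is there precisely to pay for this larger union bound.

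\textbf{Step 1 (witness set).} Fix the tentative start $t$ of $v_i$. Every processing time $t_j$ is a power of 2 with $t_j\ge 1/T$, and every tentative start is a multiple of $t_j$. Hence every start time of a previously scheduled vector is a multiple of $1/T$, so the load changes only at multiples of $1/T$. If $v_i$ does not fit at $t$, then, as in the offline lemma, the first overloaded time in $[t,t+t_i)$ is either $t$ itself or the start time of some earlier vector. Either way it is a multiple of $1/T$ in $[t,t+t_i)$. Since $t_i\le 1$, there are at most $t_iT\le T$ such times.

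\textbf{Step 2 (concentration at a fixed point).} Fix a dimension $k$ and a witness time $t'$. For each $j<i$, let $X_j$ indicate that the tentative interval of $v_j$ contains $t'$. Since starts are aligned to multiples of $t_j$, this happens for exactly one admissible start, so $\Pr[X_j=1]=t_j/\bar M$, and the $X_j$ are independent, also of the choice made for $v_i$. Put $L=100\log d\log T$, $a_j=L\,v_{j,k}\in[0,1]$, and $X=\sum_{j<i}a_jX_j$. Then
\[
\E[X]\le L\sum_j t_jv_{j,k}/\bar M\le L V/(2V)=L/2 .
\]
As in the offline proof, $X$ upper-bounds the scaled load at $t'$, because it also counts vectors that were sent to the overflow pool. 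An overload at $t'$ in dimension $k$ forces $X> L(1-v_{i,k})\ge L-1$.

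Apply the multiplicative Chernoff bound with $\mu=L/2$, using monotonicity in the mean since we only have an upper bound on $\E[X]$, and with $1+\eta=(L-1)/\mu$, so that $\eta\ge 1-2/L\ge 0.9$. This gives a tail of
\[
e^{-\eta^2\mu/3}\le e^{-0.13L}\le 2^{-18\log d\log T}=d^{-18\log T}.
\]

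\textbf{Step 3 (union bound).} Union-bound over the $d$ dimensions and the at most $T\le d^{\log T}$ witness times (the inequality uses $d\ge 2$). Conditioned on $t$, the overflow probability is then at most
\[
d^{1+\log T-18\log T}\le d^{-3},
\]
using $T\ge 2$, i.e., $\log T\ge 1$. Averaging over $t$ gives the claimed $1/d^3$ bound.

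The main obstacle is Step 1 combined with the union bound. The witness set now has size up to $T$ rather than $O(1/\epsilon)$, and the exponent in the Chernoff tail must scale like $\log d\log T$ to absorb it. Once that is in place, the only remaining care is the slack from the $-1$ in the threshold $L-1$, and the constant $100$ leaves enough room for it.
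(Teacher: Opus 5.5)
Your proposal is correct and follows essentially the same route as the paper's proof: condition on the tentative start, bound the overflow probability at each point of the $1/T$-grid in $[\tau_i,\tau_i+t_i)$ by a Chernoff bound on the tentative (over-counting) load with mean at most $L/2$, then union-bound over at most $T$ grid points and $d$ dimensions. The differences are only in the constants. The paper takes $\delta=1/2$ to get a per-point bound of $1/(Td^4)$, while you take $\eta\approx 1$ and absorb $T\le d^{\log T}$ into the exponent.
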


\begin{proof}
Suppose vector $v_i$ is tentatively assigned to \stime $\tau_i$. We condition on this assignment throughout the proof.

Fix $k\in[d]$ and let $\tau'\in[\tau_i,\tau_i+t_i)$. We first show that the probability that the addition of $v_i$ causes more than $1$ unit of resource $k$ to be used at \stime $\tau'$ is at most $1/(Td^4)$.

Let $a_j=100v_{j,k}\log d\log T$.
Note that $0\leq a_j\leq1$.

Let $X_j$ be the indicator random variable that equals $1$ if and only if the \emph{tentatively assigned} \stime of vector $v_j$ is $t_j\lfloor\tau'/t_j\rfloor$. Since $\tau'<\Bar{M}$, this is an admissible \stime for $v_j$. Since the tentative \stimes of different vectors are chosen independently, the random variables $X_j$ are independent.

Let $X=\sum_{j<i}a_jX_j$ and let $\mu=50\log d\log T$.

Note that $X_j=1$ with probability $t_j/\Bar{M}$. Substituting this into $X=\sum_{j<i}a_jX_j$ and simplifying, we deduce that
\[
\E[X]
=\sum_{j<i}\frac{t_j}{\Bar{M}}a_j
\leq \sum_{j<i}\frac{t_j}{\Bar{M}}100v_{j,k}\log d\log T
\leq \frac{V}{\Bar{M}}100\log d\log T
\leq50\log d\log T
=\mu.
\]

Recall that the maximum value of each coordinate of any vector is at most $1/(100\log d\log T)$. For the purpose of the analysis only, the coefficients $a_j$ scale each resource usage by the factor $100\log d\log T$. This scaling does not change the vectors or the capacity of the scheduling problem; it only means that the random variable $X$ is a normalized representation of the resource usage of the tentatively assigned vectors that overlap \stime $\tau'$.

Suppose that inserting $v_i$ causes an overflow in dimension $k$ at \stime $\tau'$. Then the vectors that did not overflow and overlap \stime $\tau'$ use more than $1-v_{i,k}$ units of resource $k$. Since $v_{i,k}\leq1/(100\log d\log T)$, these vectors use more than $1-1/(100\log d\log T)$ units of resource $k$.

Every vector that did not overflow and overlaps \stime $\tau'$ was tentatively assigned to a \stime that causes it to overlap $\tau'$. The random variable $X$ includes the normalized resource usage of all such vectors, as well as possibly additional vectors that were tentatively assigned there but subsequently overflowed. Hence $X$ is at least the normalized resource usage of the vectors that did not overflow and overlap $\tau'$. Multiplying the preceding lower bound by $100\log d\log T$, we conclude that an overflow in dimension $k$ at \stime $\tau'$ implies $X\geq100\log d\log T-1$.

Recall the Chernoff bound which states that, for independent random variables taking values in $[0,1]$ and with $\E[X]\leq\mu$, we have $\Pr[X\geq(1+\delta)\mu]\leq e^{-\delta^2\mu/3}$.

Apply the Chernoff bound with $\delta=1/2$. Recall that $\mu=50\log d\log T$. 
Under our assumptions on $d$ and $T$, $100\log d\log T-1\geq75\log d\log T=(3/2)\mu$. 
Therefore
\[
\Pr[X\geq100\log d\log T-1]
\leq \Pr[X\geq(3/2)\mu]
\leq e^{-\frac{(1/4)\cdot50\log d\log T}{3}}
\leq \frac{1}{Td^4}.
\]

Thus, for any fixed $\tau'\in[\tau_i,\tau_i+t_i)$, the probability that the addition of $v_i$ causes an overflow in dimension $k$ at \stime $\tau'$ is at most $1/(Td^4)$.

We now consider all \stimes in $[\tau_i,\tau_i+t_i)$. Every rounded processing time $t_j$ is an integer multiple of $1/T$, and every tentative \stime of $v_j$ is a multiple of $t_j$. Therefore, both the starting and completion \stimes of every vector lie on the $1/T$ grid. Consequently, the resource usage is constant on every half-open interval $[r/T,(r+1)/T)$, where $r\in\mathbb{Z}$.

Hence, if an overflow occurs at some $\tau'\in[\tau_i,\tau_i+t_i)$, then it also occurs at the grid point $\frac{1}{T}\lfloor T\tau'\rfloor$. Since $t_i\leq1$, the half-open interval $[\tau_i,\tau_i+t_i)$ contains at most $T$ relevant grid points. Taking a union bound over these grid points, the probability that the addition of $v_i$ causes resource $k$ to exceed $1$ at some \stime in $[\tau_i,\tau_i+t_i)$ is at most $T\cdot(1/(Td^4))=1/d^4$.

Since this holds for each $k\in[d]$, taking a union bound over the $d$ coordinates implies that $v_i$ does not fit at its tentatively assigned \stime $\tau_i$ with probability at most $d\cdot(1/d^4)=1/d^3$.

Since the bound holds for every possible tentatively assigned \stime $\tau_i$, the probability that $v_i$ is placed in the overflow pool is at most $1/d^3$.
\end{proof}

\begin{lemma}\label{knownvolume}
Suppose Algorithm \ref{randalg1} is given a sequence of vectors $v_1,v_2,\ldots,v_n$ with processing times $t_1,t_2,\ldots,t_n$ and maximum coordinate at most $1/(100\log d\log T)$, and $\Bar{M}\in\mathbb{Z}$ such that $\Bar{M}\geq\lceil2V\rceil$.
Then Algorithm \ref{randalg1} produces a schedule for $v_1,v_2,\ldots,v_n$ of expected length at most $4\Bar{M}$.
\end{lemma}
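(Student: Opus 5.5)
The schedule produced by Algorithm~\ref{randalg1} has two consecutive parts, and I will bound each separately.

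\emph{The strip part.} Every vector that is not placed in the overflow pool receives a start time $t<\Bar{M}$ that is a multiple of $t_i$. Since $\Bar{M}/t_i\in\mathbb{Z}$, we in fact have $t+t_i\le \Bar{M}$, so this part finishes by time $\Bar{M}$. Its feasibility follows directly from the rule that a vector is accepted only if it fits throughout its interval.

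\emph{The overflow part.} This part starts at time $\Bar{M}$, and I will bound its length using \autoref{shelfbound}. The hypotheses hold as follows:
\begin{itemize}
\item every processing time is already a power of $2$ that is at most $1$;
\item every coordinate is at most $\delta=1/(100\log d\log T)\le 1/2$.
\end{itemize}
Let $S$ be the random overflow set and $q_i=\max_k v_{i,k}$. Then \autoref{shelfbound} gives a length of at most $(1+2\delta)\sum_{i\in S}t_iq_i+2$. Taking expectations and using the preceding lemma, which gives $\Pr[i\in S]\le 1/d^3$, we get
\[
\E\Big[\sum_{i\in S}t_iq_i\Big]\le \frac{1}{d^3}\sum_i t_iq_i\le \frac{1}{d^3}\sum_i\sum_k t_iv_{i,k}\le \frac{d V}{d^3}=\frac{V}{d^2}.
\]
Here the last inequality uses $\sum_i t_iv_{i,k}\le V$ for each of the $d$ coordinates. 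Since $d\ge2$ and $\delta\le 1/2$, the expected overflow length is at most $2\cdot V/4+2=V/2+2$.

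\emph{Combining the parts.} The expected makespan is at most $\Bar{M}+V/2+2$. Because $\Bar{M}\ge \lceil 2V\rceil\ge 2V$, we have $V/2\le \Bar{M}/4$. It remains to absorb the additive $2$. When $V>0$, $\Bar{M}\ge\lceil 2V\rceil\ge 1$ is a positive integer, so $2\le 2\Bar{M}$. Hence the total is at most $\Bar{M}+\Bar{M}/4+2\Bar{M}\le 4\Bar{M}$.

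The only delicate point is this additive constant from the shelf bound. It is handled by the integrality of $\Bar{M}$ and the fact that $\Bar{M}\ge1$. In the degenerate case $V=0$ there are no vectors with positive demand, and the schedule is trivial.
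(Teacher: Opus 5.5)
Your proof is correct and follows the paper's argument essentially line by line: accepted jobs finish by $\Bar{M}$, the $1/d^3$ overflow probability gives $\E[\sum_{i\in S}t_iq_i]\le V/d^2$, and \autoref{shelfbound} with $\delta\le 1/2$ bounds the expected overflow part by $2V/d^2+2$. The only difference is that you state explicitly the use of $\Bar{M}\ge 1$ to absorb the additive $2$, which the paper's final inequality $\Bar{M}+\frac{2}{d^2}V+2\le 4\Bar{M}$ relies on without saying so.
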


\begin{proof}
Let $S$ be the set of vectors placed in the overflow pool and let $q_i=\max_kv_{i,k}$. Since each vector is placed in the overflow pool with probability at most $1/d^3$,
\[
\E\left[\sum_{i\in S}t_iq_i\right]
=\sum_i t_iq_i\Pr[i\in S]
\leq\frac{1}{d^3}\sum_i t_iq_i
\leq\frac{1}{d^3}\sum_{k\in[d]}\sum_i t_iv_{i,k}
\leq\frac{V}{d^2}.
\]

One can see that \autoref{shelfbound} also applies when the Shelf Algorithm is run online. Taking $\Bar{V}=\sum_{i\in S}t_iq_i$ and $\delta=1/(100\log d\log T)$, \autoref{shelfbound} gives a schedule of length at most $(1+2\delta)\Bar{V}+2$. Since $\delta\leq1/2$, the expected additional processing time of the Shelf Algorithm is at most $2V/d^2+2$.

Every vector that is not placed in the overflow pool completes by \stime $\Bar{M}$. Thus, the expected length of the entire schedule is at most
$$
\Bar{M}+\frac{2}{d^2}V+2\leq4\Bar{M}.
$$
\end{proof}

Now we prove the main theorem of this section.

\begin{theorem}\label{online96vec}
There is a randomized online algorithm that returns a vector schedule of expected length at most $96OPT$, where $OPT$ is the length of the optimal vector schedule before rounding the vector processing times, when the maximum coordinate of a vector is at most $1/(100\log d\log T)$.
\end{theorem}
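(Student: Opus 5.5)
We combine \autoref{knownvolume} with a doubling scheme on the guess $\Bar{M}$, and then account for the rounding of processing times. The constant $96$ should factor as $2\cdot 4\cdot 12$, or something close to it. Here $2$ comes from $OPT'\le 2OPT$, $4$ from the doubling, and the remaining factor from \autoref{knownvolume} together with $\Bar{M}\le 2V+1$-type bounds.

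\textbf{Doubling scheme.} Run in phases $p=0,1,2,\ldots$ with guesses $\Bar{M}_p=2^p$. In phase $p$ we run Algorithm~\ref{randalg1} with parameter $\Bar{M}_p$ on a fresh strip that starts after the region reserved for the previous phases. For that reservation we allot phase $p$ a window of length $4\Bar{M}_p$, or, more safely, we place phase $p$ after the actual makespan produced so far; this is still online, since all earlier start times are already fixed. During phase $p$ we maintain the running max-coordinate volume $V^{(p)}$ of the jobs that arrived in that phase. When an arriving job would make $\lceil 2V^{(p)}\rceil>\Bar{M}_p$, we close the phase and open phase $p+1$ with that job. In the final phase $p^*$ the premise of \autoref{knownvolume} holds, and it holds in each earlier phase for the jobs assigned to it. So the expected length of the phase-$p$ schedule is at most $4\Bar{M}_p$.

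\textbf{Summing the phases.} The total expected makespan is at most $\sum_{p\le p^*}4\cdot 2^p<8\cdot 2^{p^*}$. Because phase $p^*-1$ was closed, the jobs seen so far have volume with $\lceil 2V\rceil>2^{p^*-1}$. This gives $2^{p^*}<2\lceil 2V\rceil\le 2(2V+1)$, except when $p^*=0$, which is absorbed by the additive constant. Using $V\le OPT'$, and $OPT'\ge 1$ because some job has processing time $1$ after normalization, we get $2V+1\le 3OPT'$. Hence the expected makespan is at most $8\cdot 2\cdot 3\,OPT'=48\,OPT'\le 96\,OPT$, using $OPT'\le 2OPT$.

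\textbf{Main obstacle.} The delicate step is the phase boundary. Expectations from \autoref{knownvolume} are conditional on the jobs in a phase, and the last job that triggers a phase change must be placed in the new phase. The volume monotonicity argument must use the cumulative volume of all jobs, which lower-bounds $OPT'$, rather than the per-phase volume. One also needs $T$, or at least the normalization of maximum processing time to $1$, to be known, so that the grid argument in the overflow lemma applies. If $T$ must itself be guessed, I would fold it into the same doubling, but I expect the intended proof assumes $T$ is known and uses the simple geometric-sum bound above. I would track the exact constants only to confirm that they stay at or below $96$.
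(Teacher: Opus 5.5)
Your proposal is correct and matches the paper's proof. It applies Lemma~\ref{knownvolume} with $\Bar{M}=\lceil 2V\rceil\le 2V+1\le 3OPT'$ (using $OPT'\ge\max(V,1)$), uses $OPT'\le 2OPT$, and pays a factor $4$ for doubling, which you make explicit as $2$ for the geometric sum times $2$ for the overshoot, giving $48\,OPT'\le 96\,OPT$. Only your second placement option, starting each phase after the actual end of all earlier phases, is valid, since the $4\Bar{M}_p$ bound holds only in expectation; and the case $p^*=0$ needs no additive constant, because $8\le 48\,OPT'$ already follows from $OPT'\ge 1$.
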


\begin{proof}
If $M$ is known, then we can run Algorithm \ref{randalg1} using $\Bar{M}=M$ and obtain a schedule of expected length at most $4M$.
Recall that $M\leq3OPT'$ and $OPT'\leq2OPT$, so Algorithm \ref{randalg1} obtains a schedule of expected length at most $24OPT$.

Using the standard folklore doubling method described previously to guess $M$ increases the competitive ratio by at most a factor of $4$.
Therefore, the resulting online algorithm has expected length at most $96OPT$.
This completes the proof.
\end{proof}

\section{Derandomizing the Algorithm}
\label{sec:derandomization}

In this section, we derandomize the algorithm from the preceding section. We use the rounded processing times defined there.

For $m\in\mathbb{N}$, let a collection $f_0,f_1,\ldots,f_m$ be a
\emph{fractional vector schedule} for vectors $v_1,\ldots,v_n$.
Informally, $f_j(v_i)$ represents the fraction of vector $v_i$ assigned to start at \stime $j/T$. 
Formally, we require $f_j(v_i)\geq 0$ for every $i$ and $j$, 
\begin{equation*}
    \sum_{j=0}^{m} f_j(v_i)=1
    \qquad\text{for every }i\in[n],
\end{equation*}
and the resource constraints are satisfied at every \stime. 
That is,
\begin{equation}
\label{eq:fractional-capacity}
    \sum_{i=1}^{n}
    \sum_{\substack{j\in\{0,\ldots,m\}:\\ j/T\leq t<j/T+t_i}}
        f_j(v_i)v_{i,k}
    \leq 1
\end{equation}
for every $t\in\mathbb{R}_{+}$ and $k\in[d]$. 

The length of the fractional schedule is
\begin{equation*}
    \max_{\substack{i\in[n],\,j\in\{0,\ldots,m\}:\\f_j(v_i)>0}}
        \left(\frac{j}{T}+t_i\right).
\end{equation*}

The fraction $f_j(v_i)$ consumes $f_j(v_i)v_{i,k}$ units of resource $k$ throughout the half-open interval $[j/T,j/T+t_i)$. 
Thus, \eqref{eq:fractional-capacity} requires that, for every resource $k$ and \stime $t$, the total consumption of resource $k$ by all vector fractions active at \stime $t$ does not exceed its unit capacity.
For notational convenience we adopt the convention that $ f_j(v_i) =0 $ for $j\notin\{0,1,2,\dots,m\}$.

We say a fractional vector schedule $f_0,f_1,\dots,f_m$ has \defn{$1/2$-slack} if
$\sum_{i=1}^{n}
\sum_{\substack{j\in\{0,\ldots,m\}:\\j/T\leq t<j/T+t_i}}
f_j(v_i)v_{i,k}\leq\frac{1}{2}$
holds for any \stime $t$ and dimension $k$.

We now describe an algorithm to produce a vector schedule from a fractional vector schedule if the \volume is known.
Define $Q=4000 \log d\log T$.
We scale every coordinate of $v_i$ by $Q$ and, for notational simplicity, continue to
write $v_i$ for the scaled vector. We also scale the capacity of every resource from $1$ to $Q$, so this scaling does not change feasibility. 
 
This scaling convention is used in both the description and the analysis of the deterministic algorithm below.
The statement of Lemma~\ref{derand} refers to the original, unscaled vectors.

For a tentative start \stime $f'(v_i)$ for each vector $v_i$, define the
load after processing $v_1,\ldots,v_i$ by
\begin{equation*}
    L^i_{t,k}
      =\sum_{\substack{i'\leq i:\\f'(v_{i'})\leq t<f'(v_{i'})+t_{i'}}}
        v_{i',k}.
\end{equation*}
We also define
\begin{equation*}
    \alpha=e^{1/16},
    \qquad g(y)=\alpha^y.
\end{equation*}
The associated potential is
\begin{equation}
\label{eq:local-potential}
\Phi^i_{t,k}
=
g\left(
L^i_{t,k}
-
\alpha
\sum_{i' \leq i}
\sum_{j=T(t-t_{i'})+1}^{Tt}
f_j(v_{i'})v_{i',k}
\right).
\end{equation}
For a set of \stimes $A$, let
\begin{equation*}
    \Phi^i_{A,k}=\sum_{t\in A}\Phi^i_{t,k},
    \qquad
    \Phi^i_A=\sum_{t\in A}\sum_{k\in[d]}\Phi^i_{t,k}.
\end{equation*}
For $n'\in\mathbb{N}$ and $r'>0$, denote $[0,r']_{\frac{1}{n'}}=[0,r']\cap\frac{1}{n'}\mathbb{Z}$.

We now describe the algorithm for \autoref{derand}.
When $v_i$ arrives, the algorithm tentatively assigns it to a starting \stime 
$f'(v_i)=j/T$ with $f_j(v_i)>0$ that minimizes $\Phi^i_{[0,q-\frac{1}{T}]_{\frac{1}{T}}}$. 
If $v_i$ fits at that starting \stime, the assignment is made permanent; otherwise, $v_i$ is placed in an overflow pool.
We run the Shelf Algorithm, \autoref{shelf}, on the overflow pool, starting the schedule for the overflow pool at \stime $q$.
Since the starting \stime of the schedule for the overflow pool is known in advance and the Shelf Algorithm can be run online, each vector can be assigned a \stime as soon as it is placed in the overflow pool.

\begin{algorithm}
\caption{Deterministic Vector schedule for given fractional vector schedule and \volume.}
\label{fracalgVol}
\textbf{Input:} Vectors $v_1,\ldots,v_n$, fractional vector schedule $f_0,f_1,\ldots,f_m$ of length at most $q$.
The algorithm maintains that when $v_i$ arrives, tentative starting \stimes $f'(v_j)$ are assigned for all vectors $v_j$ with $j<i$.

When $v_i$ arrives, tentatively assign it to a starting \stime 
$j/T$ with $f_j(v_i)>0$ that minimizes $\Phi^i_{[0,q-\frac{1}{T}]_{\frac{1}{T}}}$. 
Denote this tentative starting \stime by $f'(v_i)$.

If $v_i$ fits at that \stime $f'(v_i)$, then assign it to that \stime. Otherwise, $v_i$ is placed in an overflow pool.

We use Algorithm \ref{shelf} (the Shelf algorithm) for the overflow pool, starting the schedule for the overflow pool at \stime $q$.

\textbf{Output:} Schedule for $v_1,\ldots,v_n$.
\end{algorithm}

We will prove the following lemma.
\begin{lemma}\label{derand}  
Suppose $d\geq 2$, every vector $v_i$ satisfies
$v_{i,k}\geq\frac{\max_{k'}v_{i,k'}}{d^2}$ for every $k\in[d]$,
and the maximum coordinate of every vector is at most $\frac{1}{4000 \log d\log T}$.
Let $q$ be an upper bound for the length of a $1/2$-slack fractional vector schedule, where $Tq\in\mathbb{Z}$.
Then \autoref{fracalgVol} deterministically constructs an online schedule of length
$q+2+\frac{4q}{d}$.
\end{lemma}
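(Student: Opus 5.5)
The plan is a pessimistic-estimator argument on the potential $\Phi^i_A$ with $A=[0,q-\frac1T]_{\frac1T}$. The final bound then comes from counting overflow volume and invoking \autoref{shelfbound}. All computations are in the scaled units, where the capacity is $Q=4000\log d\log T$ and every coordinate is at most $1$.

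\textbf{Step 1: the potential does not increase in expectation.} Fix $t\in A$ and $k$, and let $f'(v_i)$ be drawn from the distribution $j/T\mapsto f_j(v_i)$. Adding $v_i$ changes the potential $\Phi_{t,k}$ by a factor $\alpha^{v_{i,k}X}\cdot\alpha^{-\alpha p v_{i,k}}$. Here $X$ indicates that $v_i$ covers $t$, and $p=\sum_{j=T(t-t_i)+1}^{Tt}f_j(v_i)$ is the fractional mass of $v_i$ covering $t$. Since $v_{i,k}\le 1$, convexity gives $\E[\alpha^{v_{i,k}X}]\le 1+p(\alpha-1)v_{i,k}$. We also have $\alpha-1\le\alpha\ln\alpha$, because $\ln\alpha=1/16$ and $e^{1/16}-1\le e^{1/16}/16$. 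Combining, $1+p(\alpha-1)v_{i,k}\le e^{p v_{i,k}\alpha\ln\alpha}=\alpha^{\alpha p v_{i,k}}$, so the expected factor is at most $1$. Summing over $t\in A$ and $k$ gives $\E[\Phi^i_A]\le\Phi^{i-1}_A$. Hence the minimizing choice made by \autoref{fracalgVol} satisfies $\Phi^i_A\le\Phi^{i-1}_A\le\Phi^0_A=|A|d=Tqd$.

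\textbf{Step 2: from the potential to overflow.} Take any time $t$ and dimension $k$. By $1/2$-slack, the subtracted term is at most $\alpha Q/2$. If a vector overflows at $t$ in dimension $k$, then $L_{t,k}>Q-1$. The exponent is then at least $Q(1-\alpha/2)-1\ge 0.46Q-1$, so $\Phi_{t,k}\ge\alpha^{0.46Q-1}=e^{(0.46Q-1)/16}$. With $Q=4000\log d\log T$ this exceeds $(dT)^{c}$ for a large constant $c$, far above $Tqd$ (here $q$ enters only polynomially). The potential bound alone therefore does not say that overflows never happen, since $q$ may be large. Instead I will count them.

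\textbf{Step 3: counting overflowed volume.} This step is the main obstacle. Neither the $1/2$-slack nor the potential bound directly says that few vectors overflow. My plan is to charge. An overflowing $v_i$ witnesses, at some grid time $t$ in its interval (grid argument as in the randomized section) and some dimension $k$, a term $\Phi^{i}_{t,k}$ exceeding $(dT)^c$. Because loads only grow and $\Phi\le Tqd$ in total, the number of such witness pairs $(t,k)$ is at most $Tqd/(dT)^c$. Each witness time lies in $A$ with spacing $1/T$, so the witness intervals cover a total length of at most $q/(dT)^{c-1}$. I will then bound the $\max$-coordinate volume of the overflowed vectors via the hypothesis $v_{i,k}\ge\max_{k'}v_{i,k'}/d^2$. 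A vector overflowing because dimension $k$ is saturated contributes at least $q_i/d^2$ to that saturated dimension, and the capacity $Q$ of that dimension on the witness intervals bounds this contribution. This should give $\bar V_S\le 4q/d$ in the unscaled units, up to the power-of-two rounding. If this charging turns out to be too lossy, my fallback is to modify the potential so that overflowed vectors are excluded from $L$, and to bound the overflow via a second potential.

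\textbf{Step 4: assembling the schedule.} Non-overflowed vectors start at times where $f_j(v_i)>0$, so they finish by time $q$. \autoref{shelfbound}, with $\delta\le1/2$ and $\bar V_S\le 2q/d$, then adds at most $(1+2\delta)\bar V_S+2\le 4q/d+2$. The total length is therefore at most $q+2+4q/d$, as claimed.
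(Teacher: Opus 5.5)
Your Step~1 is the same argument as the paper's Lemma~\ref{lem:potential-monotonicity}, and Steps~2 and~4 are fine. Step~3, however, has a genuine gap. You claim that ``the capacity $Q$ of that dimension on the witness intervals bounds this contribution,'' and this is false. The vectors charged to a witness pair $(t,k)$ are exactly the ones that did \emph{not} fit. They occupy no capacity, so the capacity constraint says nothing about how much mass is charged to the same $(t,k)$. The only quantity that controls them is the tentative load. Let $v_i$ be the first vector charged to $(t,k)$. The scheduled load at $(t,k)$ already exceeds $Q-1$ just before $v_i$. Every charged vector covers $t$ tentatively and so is counted in $L^n_{t,k}$. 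Hence $\sum_{i\text{ charged to }(t,k)}v_{i,k}\le (L^n_{t,k}-(Q-1))_+$, and this excess has no a priori bound. Your count of witness pairs, at most $Tqd/(dT)^c$, therefore only says that few pairs are overloaded, not that each pair carries little overflow. If you bound each pair separately using $\Phi^n_{t,k}\le Tqd$, you get an excess of order $\log(Tqd)$ per pair. That $\log q$ factor cannot be absorbed when $q$ is large.

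The missing idea is to make the potential dominate the \emph{size} of the overload linearly, not merely exceed a threshold. Let $y=(L^n_{t,k}-Q+1)_+$. By $1/2$-slack the fractional term is at most $\alpha Q/2$, so $\Phi^n_{t,k}\ge\alpha^{y+Q(1-\alpha/2)-1}$ whenever $y>0$. Since $y\alpha^{-y}\le 16/e$ for all $y\ge0$, this gives $\Phi^n_{t,k}\ge C\,(L^n_{t,k}-Q+1)_+$ with $C=e^{\Omega(Q)}\ge\tfrac12 Td^6$; this is Lemma~\ref{lem:potential-dominates-overload}. Now combine the per-pair charging inequality above with $t_i\le1$ and $v_{i,k}\ge\max_{k'}v_{i,k'}/d^2$. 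This yields $\sum_{i\in I^S}\sum_{k'}v_{i,k'}t_i\le d^3\sum_{t\in A,\,k}(L^n_{t,k}-Q+1)_+\le\frac{2d^3}{Td^6}\Phi^n_A\le\frac{2q}{d^2}$ in scaled units, which is far more than the Shelf bound needs. Your ``total length of witness intervals'' reasoning is also unnecessary and somewhat misleading, since charged vectors can have processing time up to $1$, not $1/T$. The factor $T$ is absorbed only because $|A|=Tq$ and the constant $C$ contains a factor $T$. The fallback of a second potential is not needed.
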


To prove Lemma~\ref{derand}, we first show that
$\Phi^i_{[0,q-\frac{1}{T}]_{\frac{1}{T}}}$ is nonincreasing in $i$. 
\begin{lemma}
\label{lem:potential-monotonicity}
For every $i\in[n]$,
\begin{equation*}
    \Phi^i_{[0,q-\frac{1}{T}]_{\frac{1}{T}}}
    \leq \Phi^{i-1}_{[0,q-\frac{1}{T}]_{\frac{1}{T}}}.
\end{equation*}
\end{lemma}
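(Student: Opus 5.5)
The plan is an averaging argument: I will show that the expected change in the potential, when $v_i$'s start time is drawn from the distribution $f_\cdot(v_i)$, is at most $0$. The algorithm minimizes $\Phi^i_{[0,q-\frac1T]_{\frac1T}}$ over the support of $f_\cdot(v_i)$. So the minimizer is no worse than this average, which gives the monotonicity claim.

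Fix a grid time $t\in[0,q-\frac1T]_{\frac1T}$ and a dimension $k$. Placing $v_i$ at start $j/T$ changes the exponent of $\Phi_{t,k}$ in two ways.
\begin{itemize}
\item The load term $L_{t,k}$ increases by $v_{i,k}\mathbf{1}[j/T\le t<j/T+t_i]$.
\item The subtracted term always increases by $\alpha\sum_{j=T(t-t_i)+1}^{Tt}f_j(v_i)v_{i,k}$, whatever $j$ is chosen. Call this quantity $\alpha p_{t,k}$.
\end{itemize}
The window $T(t-t_i)+1\le j\le Tt$ should be exactly the set of $j$ with $j/T\le t<j/T+t_i$ (for grid $t$ and $t_i$ a multiple of $1/T$). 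If the endpoint convention in the sum turns out to be off by one, I will adjust it. With this, $p_{t,k}=\Pr_{j\sim f(v_i)}[v_i \text{ covers } t]\cdot v_{i,k}$.

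Hence
\[
\Phi^i_{t,k}=\Phi^{i-1}_{t,k}\,\alpha^{v_{i,k}X-\alpha p_{t,k}},
\]
where $X$ is the indicator that $v_i$ covers $t$, and $\Pr[X=1]=p_{t,k}/v_{i,k}=:\pi$. Taking expectation over $j$,
\[
\E[\alpha^{v_{i,k}X}]=1+\pi(\alpha^{v_{i,k}}-1).
\]
After the scaling by $Q$, the coordinates satisfy $v_{i,k}\le 1$. By convexity, $\alpha^{x}-1\le x(\alpha-1)$ for $x\in[0,1]$. Therefore
\[
\E[\alpha^{v_{i,k}X}]\le 1+p_{t,k}(\alpha-1)\le e^{p_{t,k}(\alpha-1)}.
\]
It then suffices to verify that
\[
e^{p(\alpha-1)}\le \alpha^{\alpha p}=e^{\alpha p\ln\alpha}=e^{p\alpha/16}.
\]
This reduces to $\alpha-1\le\alpha/16$, i.e. $1-e^{-1/16}\le 1/16$, which holds since $1-e^{-x}\le x$. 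So $\E[\Phi^i_{t,k}]\le\Phi^{i-1}_{t,k}$ for each $(t,k)$.

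Summing over all grid $t\in[0,q-\frac1T]$ and all $k\in[d]$ gives $\E_j[\Phi^i_{[0,q-\frac1T]_{\frac1T}}]\le\Phi^{i-1}_{[0,q-\frac1T]_{\frac1T}}$. Here the expectation uses only $j$ in the support of $f(v_i)$, since $f$ is a distribution because $\sum_j f_j(v_i)=1$. The algorithm picks the minimizing $j$ in this support, so its value is at most the expectation, proving the lemma.

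One point needs care. The tentative assignment $f'(v_i)$ enters $L^i$ even if $v_i$ later overflows, because $L$ is defined over all $i'\le i$ using $f'$. So the overflow step does not affect the potential, and the argument is purely about tentative starts.

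The main obstacle I expect is bookkeeping rather than the inequality. I need to check that the summation window in \eqref{eq:local-potential} exactly matches the set of starts covering $t$, including the convention $f_j=0$ outside $\{0,\dots,m\}$. I also need that times beyond $q-\frac1T$ are irrelevant: these are either excluded from the sum or unaffected, since the fractional schedule has length at most $q$. Finally, I need the scaled coordinates to satisfy $v_{i,k}\le1$, which follows from the hypothesis $v_{i,k}\le 1/Q$ before scaling.
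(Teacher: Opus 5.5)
Your proposal is correct and follows essentially the same route as the paper: average $\Phi^i_{t,k}$ over the start-time distribution $f_\cdot(v_i)$, bound $\E[\alpha^{v_{i,k}X}]$ using $\alpha^x-1\le x(\alpha-1)$ and $1+x\le e^x$, and cancel this against the deterministic increase $\alpha p_{t,k}$ of the subtracted term via $\alpha\ln\alpha\ge\alpha-1$ (your inequality $1-e^{-1/16}\le 1/16$ is the same fact). You then sum over $(t,k)$ and note that the minimizer over the support is no worse than the average, exactly as the paper does; your bookkeeping checks (the window $T(t-t_i)+1\le j\le Tt$ matching the covering starts for grid $t$, tentative assignments entering $L^i$ regardless of overflow, and $v_{i,k}\le 1$ after scaling) all agree with the paper's setup.
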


\begin{proof}
Fix $i$, $t$, and $k$. Consider the randomized assignment that starts $v_i$
at $j/T$ with probability $f_j(v_i)$. 
Conditional on the preceding
assignments, vector $v_i$ contributes $v_{i,k}$ to the load at \stime $t$ with
probability $\sum_{j=T(t-t_i)+1}^{Tt}f_j(v_i)$. It follows from
\eqref{eq:local-potential} that
\begin{align*}
    \mathbb{E}[ \Phi^i_{t,k} ]  
 & =   \left(\sum_{j=T(t-t_i)+1}^{Tt} f_j(v_i)\right) g \left(L^{i-1}_{t,k} + v_{i,k} - \alpha\sum_{i'\leq i}\sum_{j=T(t-t_{i'})+1}^{Tt}f_j(v_{i'})v_{i',k}\right) \nonumber\\
&\quad +\left(1-\sum_{j=T(t-t_i)+1}^{Tt}f_j(v_i)\right)
g\left(L^{i-1}_{t,k}-\alpha\sum_{i'\leq i}\sum_{j=T(t-t_{i'})+1}^{Tt}f_j(v_{i'})v_{i',k}\right)\nonumber\\
    &=\Phi^{i-1}_{t,k} 
      \alpha^{-\alpha v_{i,k}\sum_{j=T(t-t_i)+1}^{Tt}f_j(v_i)}
      \left(1+
      \left(\sum_{j=T(t-t_i)+1}^{Tt}f_j(v_i)\right)
      (\alpha^{v_{i,k}}-1)\right) \\
    &\leq \Phi^{i-1}_{t,k}\,
      \exp\!\left(
        -\alpha v_{i,k}\ln\alpha
          \sum_{j=T(t-t_i)+1}^{Tt}f_j(v_i)
        +v_{i,k}(\alpha-1)
          \sum_{j=T(t-t_i)+1}^{Tt}f_j(v_i)
      \right) \\
    &\leq \Phi^{i-1}_{t,k}.
\end{align*}
The first inequality uses
$\alpha^x-1\leq x(\alpha-1)$ for $x\in[0,1]$ and $1+x\leq e^x$; here $v_{i,k}\in[0,1]$ after scaling.
The second uses $\alpha\ln\alpha\geq\alpha-1$ for $\alpha\geq1$.
Summing over $t\in[0,q-\frac{1}{T}]_{\frac{1}{T}}$ and $k\in[d]$ gives
\begin{equation*}
    \mathbb{E}[\Phi^i_{[0,q-\frac{1}{T}]_{\frac{1}{T}}}]
      \leq \Phi^{i-1}_{[0,q-\frac{1}{T}]_{\frac{1}{T}}}.
\end{equation*}
Therefore, at least one starting \stime in the support of $f(v_i)$ does not increase the total potential. 
The algorithm chooses such a starting \stime, so the claim follows.
\end{proof}

We claim the following lemma holds.

\begin{lemma}
\label{eqneedprove}
If $L^i_{t,k}>Q-1$, then
\begin{equation*}
    L^i_{t,k}-\frac{3Q}{4}\geq96.
\end{equation*}
\end{lemma}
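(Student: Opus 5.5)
This is a purely arithmetic consequence of the scaling convention $Q=4000\log d\log T$, so the plan is to reduce it to a lower bound on $Q$. From the hypothesis $L^i_{t,k}>Q-1$ we get
\[
L^i_{t,k}-\frac{3Q}{4} > Q-1-\frac{3Q}{4} = \frac{Q}{4}-1 .
\]
It therefore suffices to show $\frac{Q}{4}-1\geq 96$, that is, $Q\geq 388$.

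For the lower bound on $Q$, recall the standing assumptions of this section: $d\geq 2$ and $T\geq 2$, with all logarithms base $2$. Hence $\log d\geq 1$ and $\log T\geq 1$, so
\[
Q=4000\log d\log T\geq 4000 .
\]
Consequently $\frac{Q}{4}-1\geq 999\geq 96$, and the claim follows (even with strict inequality).

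There is no real obstacle here. The only point to check is that the scaled quantity $L^i_{t,k}$ is being compared against the scaled capacity $Q$, and that the assumptions $d,T\geq 2$ are in force, since they are what guarantee $\log d\log T\geq 1$. The slack in the constant ($999$ versus $96$) is presumably reserved for the subsequent potential argument, where an overload combined with the $1/2$-slack of the fractional schedule (fractional load at most $Q/2$, so $\alpha$ times it is at most roughly $0.54Q\leq 3Q/4$) makes the exponent of $\Phi$ at least $96$. That in turn bounds the probability, or count, of overflow through $\alpha^{96}=e^{6}$, which is large compared to the potential budget.
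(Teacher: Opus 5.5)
Your proof is correct and is essentially the paper's argument: the hypothesis $L^i_{t,k}>Q-1$ reduces the claim to $\frac{Q}{4}-1\geq 96$, which follows from the standing assumptions $d,T\geq 2$ (so $\log d\log T\geq 1$ and $Q\geq 4000$). Your closing speculation is not needed for the lemma; in the paper the constant $96$ is used only to apply $e^{x/16}\geq x/2$ for $x\geq 96$ in the next lemma.
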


\begin{proof}
Because $L^i_{t,k}>Q-1$, it suffices to show that
\begin{equation}
\label{eq:Q-gap} 
    \frac{Q}{4}-1\geq96.
\end{equation}
The definition of $Q$ now implies the desired inequality under the standing assumptions $d,T\geq2$.
\end{proof}

We next prove the following lemma relating $\Phi^i_{t,k}$ and $L^i_{t,k}-Q+1$.

\begin{lemma}
\label{lem:potential-dominates-overload}
For every $i$, $t$, and $k$,
\begin{equation*}
    \Phi^i_{t,k}
      \geq \frac{1}{2}Td^6
        \bigl(L^i_{t,k}-Q+1\bigr)_+.
\end{equation*}
\end{lemma}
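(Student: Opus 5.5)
We split on whether $L^i_{t,k}\le Q-1$. If it is, the right-hand side is $0$ and the claim is immediate because $\Phi^i_{t,k}=g(\cdot)>0$. So assume $L^i_{t,k}>Q-1$, and write $x=L^i_{t,k}-Q+1>0$.

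\textbf{Lower-bounding the exponent.} The $1/2$-slack property, applied in the scaled units where each capacity is $Q$, shows that the subtracted fractional term satisfies
\[
\sum_{i'\le i}\sum_{j=T(t-t_{i'})+1}^{Tt} f_j(v_{i'})v_{i',k}\le \frac{Q}{2}.
\]
This term is the fractional load at time $t$, up to the indexing of the window, which matches the half-open active interval. Since $\alpha=e^{1/16}<1.07$, the subtracted amount is at most $\alpha Q/2\le 0.54\,Q<3Q/4$. Hence the exponent in \eqref{eq:local-potential} is at least $L^i_{t,k}-3Q/4$, and because $g$ is increasing,
\[
\Phi^i_{t,k}\ge \alpha^{L^i_{t,k}-3Q/4}.
\]

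\textbf{Splitting the exponent.} Put $y=L^i_{t,k}-3Q/4$, so that $y=x+Q/4-1$. Lemma~\ref{eqneedprove} gives $y\ge 96$, and more precisely $Q/4-1\ge 96$. Write $\alpha^{y}=\alpha^{Q/4-1}\cdot\alpha^{x}$. For the first factor, $\alpha^{Q/4-1}=e^{(Q/4-1)/16}$ with $Q=4000\log d\log T$, so the exponent is about $62\log d\log T$ in base $e$. Since $d,T\ge2$, this is at least $6\ln d+\ln T+\ln 2+c$ for a margin $c$, which gives $\alpha^{Q/4-1}\ge 2\cdot Td^6$ with room to spare.

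\textbf{Absorbing the factor $x$.} It remains to show $\alpha^{x}\ge x/4$ for all $x>0$, or some similar constant, using the slack in the first factor. The function $e^{x/16}/x$ is minimized at $x=16$, where it equals $e/16$. So $\alpha^x\ge (e/16)\,x$, and the constant $16/e<6$ is absorbed into the first factor, since $\alpha^{Q/4-1}$ exceeds $12\,Td^6$ by a wide margin. Combining the pieces gives $\Phi^i_{t,k}\ge \tfrac12 Td^6\,x$.

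\textbf{Main obstacle.} The delicate step is the slack bound in the first step. I have to check that the index window $j\in(T(t-t_{i'}),Tt]$ really counts only fractions active at $t$. For grid times $t$ this holds because the start $j/T$ satisfies $t-t_{i'}<j/T\le t$, i.e.\ the fraction is active at $t$. I also have to confirm that summing only over $i'\le i$ can only decrease the sum. The remaining arithmetic on $Q$ is routine given $d,T\ge 2$.
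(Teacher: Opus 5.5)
Your proof is correct and follows the paper's strategy: use the $1/2$-slack to bound the subtracted fractional term by $\alpha Q/2<3Q/4$, then let the exponential in the overload dominate a linear function, with the factor $Td^6$ paid for by a margin of order $Q=\Theta(\log d\log T)$ in the exponent. The only difference is bookkeeping: the paper extracts $Td^6$ from the $Q/40$ gap between $\alpha\cdot\frac{2Q}{3}$ and $\frac{3Q}{4}$ and then applies $e^{y/16}\ge y/2$ for $y\ge 96$ (via Lemma~\ref{eqneedprove}), whereas you extract $Td^6$ from the $Q/4-1$ gap between $\frac{3Q}{4}$ and $Q-1$ and use the global bound $e^{x/16}\ge \frac{e}{16}x$, so you do not actually need the $y\ge 96$ threshold.
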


\begin{proof}
The claim is immediate when $L^i_{t,k}\leq Q-1$. Otherwise, the fractional
load bound and \eqref{eq:local-potential} yield
\begin{align*}
    \Phi^i_{t,k}
    &\geq \alpha^{L^i_{t,k}-\alpha \frac{2Q}{3}} 
    \geq \alpha^{L^i_{t,k}-\frac{3Q}{4}+\frac{Q}{40}} 
    \geq Td^6\,
       \alpha^{L^i_{t,k}-\frac{3Q}{4}} 
    \geq \frac{1}{2}Td^6
       \left(L^i_{t,k}-\frac{3Q}{4}\right) 
    \geq \frac{1}{2}Td^6
       \bigl(L^i_{t,k}-Q+1\bigr).
\end{align*}

For the first inequality, the original fractional schedule has $1/2$-slack, so after scaling by $Q$ its fractional load is at most $Q/2$ and hence at most $2Q/3$. Thus,
$\sum_{i'\leq i}
\sum_{\substack{j\in\{0,\ldots,m\}:\\j/T\leq t<j/T+t_{i'}}}
f_j(v_{i'})v_{i',k}\leq\frac{2Q}{3}$.
The second line uses $e^{1/16}\leq87/80$.
The third inequality follows from
\begin{equation*}
    \alpha^{Q/40}
      \geq e^{Q/640}
      \geq e^{6\log d\log T}
      \geq Td^6.
\end{equation*}
The fourth line follows from Lemma~\ref{eqneedprove} and the inequality
$e^{x/16}\geq x/2$ for $x\geq96$. The final line follows
from \eqref{eq:Q-gap}.
\end{proof}

We now prove \autoref{derand}.  
\begin{proof}(of \autoref{derand})

Because $L^0_{t,k}=0$ and the fractional prefix load is also zero,
$\Phi^0_{t,k}=1$. Hence
\begin{equation}
\label{eq:initial-potential}
    \Phi^i_{[0,q-\frac{1}{T}]_{\frac{1}{T}}}
      \leq \Phi^0_{[0,q-\frac{1}{T}]_{\frac{1}{T}}}
      =Tqd.
\end{equation}

Let $I^S$ be the set of vectors that do not fit at their tentative start
times, and let $V^S$ denote their \volume. 
Here, $V^S$ refers to the \volume of the overflow vectors after their coordinates have been scaled by $Q$.
For each $i\in I^S$, choose a resource $k(i)$ and
a \stime $t(i)$ for which assigning $v_i$ would violate the capacity constraint for resource $k(i)$ at \stime $t(i)$. Since all tentative starting and completion \stimes lie on the $\frac{1}{T}$-grid and the load is constant between consecutive grid points, we may choose such a \stime $t(i)\in\frac{1}{T}\mathbb{Z}$.
We say that $v_i$ is \emph{charged to} $(t(i),k(i))$.

For each fixed $t$ and $k$, suppose that at least one vector in $I^S$ is charged to $(t,k)$, and let $v_i$ be the first such vector considered.
Since $v_i$ does not fit at its tentative starting \stime and its scaled coordinates are at most $1$, immediately before $v_i$ is considered, the previously scheduled vectors use more than $Q-1$ units of resource $k$ at \stime $t$.

Moreover, every vector charged to $(t,k)$ has a tentative assignment that covers \stime $t$, and hence contributes its $k$-th coordinate to $L^n_{t,k}$. This remains true even if the vector is subsequently placed in the overflow pool, since $L^n_{t,k}$ is defined using the tentative
assignments. Therefore,
\begin{equation*}
\sum_{\substack{i\in I^S:\\t(i)=t,\;k(i)=k}}v_{i,k}
\leq
\left(L^n_{t,k}-(Q-1)\right)_+.
\end{equation*}

Since $v_{i,k}\geq\frac{\max_{k'\in[d]}v_{i,k'}}{d^2}$ and $t_i\leq1$ for all $i$, we have
\begin{equation}\label{VSeq}
\begin{aligned}
V^S
&= \max_{k\in[d]}\sum_{i\in I^S}v_{i,k}t_i \\
&\leq \sum_{i\in I^S}\sum_{k\in[d]}v_{i,k}t_i \\
&= \sum_{\substack{0\leq t \leq q-\frac{1}{T}:\\ t\in\frac{1}{T}\mathbb{Z}}}
   \sum_{k\in[d]}
   \sum_{\substack{i\in I^S:\\ t(i)=t,\;k(i)=k}}
   \sum_{k'\in[d]}v_{i,k'}t_i \\
&\leq \sum_{\substack{0\leq t\leq q-\frac{1}{T}:\\ t\in\frac{1}{T}\mathbb{Z}}}
   \sum_{k\in[d]}
   \sum_{\substack{i\in I^S:\\ t(i)=t,\;k(i)=k}}
   d^3v_{i,k} \\
&\leq d^3
   \sum_{\substack{0\leq t\leq q-\frac{1}{T}:\\ t\in\frac{1}{T}\mathbb{Z}}}
   \sum_{k\in[d]}
   \left(L^n_{t,k}-(Q-1)\right)_+ .
\end{aligned}
\end{equation}
Here, $(x)_+=\max\{0,x\}$.

We show that \autoref{fracalgVol} deterministically produces a vector schedule of length at most $q+2+\frac{4q}{d}$.
First note that if the vector $v_i$ is not in the overflow pool, then it is placed at a spot $j$ such that $f_j(v_i)>0$.  
By definition, $\frac{j}{T}+t_i\leq q$.
In other words, the vectors not in the overflow pool all finish before \stime $q$.

Combining \eqref{VSeq},
Lemma~\ref{lem:potential-dominates-overload}, and
Equation~\eqref{eq:initial-potential}, we obtain the following bound on the \volume of the overflow vectors.
\begin{align*}
    V^S
    &\leq d^3
      \sum_{\substack{0\leq t\leq q-\frac{1}{T}:\\ t\in\frac{1}{T}\mathbb{Z}}}
      \sum_{k\in[d]}
        \bigl(L^n_{t,k}-Q+1\bigr)_+ 
      \leq \frac{2d^3}{Td^6}
      \sum_{\substack{0\leq t\leq q-\frac{1}{T}:\\ t\in\frac{1}{T}\mathbb{Z}}}
      \sum_{k\in[d]}\Phi^n_{t,k} 
    \leq \frac{2q}{d^2}.
\label{overflowbd}
\end{align*}

We now return to the original, unscaled vectors. The \volume of the overflow vectors is therefore $\frac{V^S}{Q}$, and their maximum coordinate is at most $\frac{1}{Q}$. 
By \autoref{shelfbound}, the length of the schedule returned by the Shelf algorithm is at most
\begin{equation*}
2+\left(1+\frac{2}{Q}\right)d\frac{V^S}{Q}
\leq2+2dV^S
\leq2+\frac{4q}{d}.
\end{equation*}

Thus the total length of the schedule returned by \autoref{fracalgVol} is at most $q+2+\frac{4q}{d}$.
    
\end{proof}

\begin{algorithm}
\caption{Fractional schedule for a given {\volume}}
\label{fracalg1}
\textbf{Input:}  
A value $\Bar{M}\in\mathbb{Z}_+$ and online vectors
$v_1,\ldots,v_n$ with processing times $t_1,\ldots,t_n$, {\volume} $V$, and
maximum coordinate at most $1/(100\log d\log T)$, where
\begin{equation*}
    \Bar{M}\geq\lceil2V\rceil.
\end{equation*}

When $v_i$ arrives, set
\begin{equation*}
    f_{Tt}(v_i)=\frac{t_i}{\Bar{M}}
\end{equation*}
for each $t\in[0,\Bar{M})$ such that $t/t_i\in\mathbb{Z}$, and set
$f_j(v_i)=0$ at every other $j\in\mathbb{Z}$.

\textbf{Output:} A resulting $1/2$-slack fractional vector schedule.
\end{algorithm}

To obtain the deterministic algorithm, we construct a fractional schedule following the ideas of Section~\ref{onlinesmall}. 
Instead of randomly choosing a starting \stime for $\widetilde v_i$, we assign equal fractional mass to the admissible multiples of $t_i$ in
$[0,\Bar{M})$. This construction is summarized in \autoref{fracalg1}.

Now we show how to obtain the deterministic algorithm.
\begin{theorem}
\label{thm:deterministic-online}
For $ d \geq 2$, there is a deterministic online algorithm that returns a vector schedule of
length at most $160 \,OPT$, where $OPT$ is the length of an
optimal schedule before processing-time rounding, provided that the maximum
coordinate of every input vector is at most
\begin{equation*}
    \frac{1}{Q}
      =\frac{1}{4000 \log d\log T}.
\end{equation*}
\end{theorem}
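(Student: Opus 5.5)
The plan is to combine \autoref{fracalg1}, \autoref{derand}, and the doubling method. We first work with the rounded processing times $t_i$, which are powers of 2. This gives $OPT'\leq 2OPT$, with $V\leq OPT'$ and $M=\lceil 2V\rceil\leq 2V+1\leq 3OPT'$, exactly as in Section~\ref{onlinesmall}.

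Step one: assume $M$ is known. Run \autoref{fracalg1} with $\Bar{M}=M$. Feed its output, online, into \autoref{fracalgVol} with $q=\Bar{M}$. This is possible because the fractional mass of $v_i$ is fixed at arrival and depends only on $v_i$.

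We must check that this fractional schedule has $1/2$-slack and length at most $q$.
\begin{itemize}
\item Length: every start $t<\Bar{M}$ is a multiple of $t_i$ and $\Bar{M}/t_i\in\mathbb{Z}$, so $t+t_i\leq\Bar{M}$. Also $T\Bar{M}\in\mathbb{Z}$.
\item Slack: at time $t$, exactly one admissible start of $v_i$ covers $t$, and it carries mass $t_i/\Bar{M}$. Hence the fractional load in coordinate $k$ is $\sum_i t_iv_{i,k}/\Bar{M}\leq V/\Bar{M}\leq 1/2$.
\end{itemize}

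Step two: remove the extra hypothesis of \autoref{derand} that $v_{i,k}\geq\max_{k'}v_{i,k'}/d^2$. Replace each $v_i$ by $\widetilde v_i$ with $\widetilde v_{i,k}=\max(v_{i,k},\max_{k'}v_{i,k'}/d^2)$. This is presumably why the paper wrote $\widetilde v_i$. Scheduling $\widetilde v_i$ feasibly also schedules $v_i$ feasibly.

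The maximum coordinate does not change, so smallness is preserved. The volume grows by at most an additive $\sum_i t_i\max_k v_{i,k}\cdot d/d^2\leq \sum_i\sum_k t_iv_{i,k}/d\leq V$, so $\widetilde V\leq 2V$.

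Step three: choose $\Bar{M}=\lceil 2\widetilde V\rceil\leq 4V+1$ and apply \autoref{derand}. The schedule length is at most $q+2+4q/d\leq 3q+2$ (using $d\geq 2$). Since $q\geq 1$ whenever there is any job, this is at most $5q$.

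Expressed in terms of $OPT$, we get a bound of the form $c\cdot\max(V,1)$. We use $OPT\geq 1$, which holds because the longest job has processing time 1 (before rounding), together with $V\leq OPT'\leq 2OPT$. The guarantee for known $M$ is therefore some $c_0\,OPT$.

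Step four: apply the standard doubling on the guess $\Bar{M}$, at a cost of a factor $4$, exactly as in \autoref{online96vec}.

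The main obstacle is the constant bookkeeping needed to reach exactly $160$, i.e.\ a known-$M$ bound of $40\,OPT$. The additive $+2$ from the shelf and the $+1$ from the ceiling must be absorbed using $OPT\geq 1$, and the volume increase from $\widetilde v$ must be kept small. I would first try a tighter bound on that increase, $\widetilde V\leq V+V/d\leq \tfrac32 V$, and pick $\Bar{M}$ accordingly. If that is not enough, I would exploit $q+2+4q/d\leq 3q+2$ more carefully when doubling, charging the additive term only once per phase.
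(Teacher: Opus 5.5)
Your plan is correct and follows the paper's proof: raise the small coordinates to $\max_{k'}v_{i,k'}/d^2$, build the uniform $1/2$-slack fractional schedule of Algorithm~\ref{fracalg1} with $\Bar{M}=\lceil 2\widetilde V\rceil$, apply Lemma~\ref{derand}, and double. The bookkeeping closes with the tighter bound you say you would try first, $\widetilde V\le(1+1/d)V$, which is exactly the paper's auxiliary lemma. Any single coordinate's volume grows by at most $\sum_i t_i\max_{k'}v_{i,k'}/d^2\le V/d$; your extra factor $d$ came from summing over coordinates rather than taking the maximum. Your stated chain $\widetilde V\le 2V$ with $5q$ would only give $180\,OPT$, but with the tighter bound $\Bar{M}\le 3V+1\le 7\,OPT$ and $5\Bar{M}\le 35\,OPT$, so doubling gives $140\,OPT\le 160\,OPT$. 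The paper gets $40$ rather than $35$ only because it absorbs the $+1$ via $OPT'\ge 1$ instead of your $OPT\ge 1$.
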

\begin{proof} 
First we modify the received set of vectors $v_1,v_2,\ldots,v_n$ so that no coordinate of any vector $v_i$ is less than $\frac{1}{d^2}$ times the maximum coordinate.
That is, for each coordinate $v_{ik}$ that is less than
$\frac{\max_{k'}v_{i,k'}}{d^2}$, set that coordinate to
$\frac{\max_{k'}v_{i,k'}}{d^2}$.
Call this new set of vectors $\tilde{v_1},\tilde{v_2},\ldots,\tilde{v_n}$.
Let $\tilde{V}$ be the \volume of these vectors.
\begin{lemma}
    $\tilde{V}\leq\left(1+\frac{1}{d}\right)V$. 
\end{lemma}
\begin{proof}
For each $i\in[n]$, let $k(i)$ be such that $v_{i,k(i)}$ is a maximum coordinate of $v_i$.  

\[
\begin{aligned}
\tilde{V}
& = \max_{k\in[d]} \sum_{i=1}^{n}\tilde{v}_{i,k}t_i
 \leq \max_{k\in[d]} \sum_{i=1}^{n}
\left(v_{i,k}+\frac{1}{d^2}v_{i,k(i)}\right)t_i \\[2pt]
&\leq \max_{k\in[d]}\sum_{i=1}^{n}v_{i,k}t_i
+\frac{1}{d^2}\sum_{i'=1}^{n}v_{i',k(i')}t_{i'}
= V+\frac{1}{d^2}\sum_{k'\in[d]}
\sum_{\substack{i'\in[n]:\\k(i')=k'}}
v_{i',k'}t_{i'} \\[2pt]
&\leq V+\frac{1}{d^2}\sum_{k'\in[d]}\sum_{i'\in[n]}v_{i',k'}t_{i'}
\leq V+\frac{1}{d^2}\sum_{k'\in[d]}V
\leq V+\frac{1}{d}V.
\end{aligned}
\]
\end{proof}

The case $\widetilde V=0$ is trivial, so assume $\widetilde V>0$.
Let $M=\lceil2\widetilde V\rceil$. First suppose that $M$ is known and set $\Bar{M}=M$.
We apply \autoref{fracalg1} to the modified vectors.

Since $\Bar{M}$ is an integer and every rounded processing time $t_i$ is a power of $2$ between $\frac{1}{T}$ and $1$, $\Bar{M}/t_i\in\mathbb{Z}$. Thus, the fractional masses assigned to each vector sum to $1$. Moreover, the latest starting \stime in the support of $f(\widetilde v_i)$ is $\Bar{M}-t_i$, so every vector fraction finishes by \stime $\Bar{M}$. Therefore, the fractional schedule has length at most $\Bar{M}$.

For every \stime $t\in[0,\Bar{M})$ and vector $\widetilde v_i$, exactly one start \stime in the support of
$f(\widetilde v_i)$ covers $t$. Therefore, for every $k\in[d]$,
\begin{equation*}
    \sum_{i=1}^{n}
    \sum_{\substack{j:\\j/T\leq t<j/T+t_i}}
        f_j(\widetilde v_i)\widetilde v_{i,k}
      =\sum_{i=1}^{n}\frac{t_i}{\Bar{M}}\widetilde v_{i,k}
      \leq\frac{\widetilde V}{\Bar{M}}
      \leq\frac{1}{2}.
\end{equation*}

By construction, the modified vectors satisfy
$\widetilde v_{i,k}\geq\frac{\max_{k'}\widetilde v_{i,k'}}{d^2}$, and modifying the smaller coordinates does not change the maximum coordinate of any vector.
Thus, by Lemma~\ref{derand}, applying \autoref{fracalgVol} to this fractional schedule produces
an integral schedule of length at most
\begin{equation*}
    \Bar{M}+2+\frac{4\Bar{M}}{d}
      \leq\left(3+\frac{4}{d}\right)\Bar{M},
\end{equation*}
where we use $\Bar{M}\geq1$.

Moreover,
$OPT'\geq V\geq\widetilde V/(1+1/d)$ and
$OPT'\geq1$. Letting $M=\lceil2\widetilde V\rceil$ gives
\begin{equation*}
    M\leq\left(3+\frac{2}{d}\right)OPT'.
\end{equation*}
If $\Bar{M}=M$, then, for $d\geq2$,
\begin{equation*}
    \left(3+\frac{4}{d}\right)\Bar{M}
      \leq
    \left(3+\frac{4}{d}\right)
    \left(3+\frac{2}{d}\right)OPT'
      \leq 20 OPT'
      \leq 40 OPT.
\end{equation*}
As in Section~\ref{onlinesmall}, a standard folklore doubling argument for guessing $M$ increases this factor by at most four.
Therefore, the resulting deterministic online schedule has length at most $160OPT$.

\end{proof}

\section*{AI Disclosure} The authors used GPT-5.6 Sol (OpenAI) solely for English-language polishing, including improving grammar, clarity, and readability. The tool was not used to develop, verify, or generate any technical content, mathematical arguments, proofs, algorithms, results, or scientific claims in this paper. All technical contributions and their verification are entirely the work of the authors.

\newpage
\bibliography{main}

\end{document}